\newtheorem{proposition}{Proposition}
\newtheorem{theorem}{Theorem}
\newtheorem{lemma}{Lemma}
\newtheorem{definition}{Definition}
\theoremstyle{remark}
\newtheorem{remark}{Remark}
\newtheorem{example}{Example}
\newcommand{\e}{\mathrm{e}}
\newcommand{\tf}{\tilde{f}}
\newcommand{\A}{\mathcal{A}}
\newcommand{\B}{\mathbb{B}}
\newcommand{\G}{\mathcal{G}}
\newcommand{\tG}{\tilde{G}}
\renewcommand{\b}{b}
\newcommand{\X}{X}
\newcommand{\Y}{\B^m}
\newcommand{\sign}{\mathrm{sign}}
\newcommand*\circled[1]{\tikz[baseline=(char.base)]{
                \node[shape=circle,draw,inner sep=1.5pt] (char) {#1};}}
\newcommand{\NN}[1]{\{1,\dots, #1\}}
\begin{document}

\title{On the conversion of multivalued to Boolean dynamics}
\author{Elisa Tonello \\
       School of Mathematical Sciences \\
       University of Nottingham, Nottingham, NG7 2RD}
\date{}

\maketitle

\begin{abstract}
Results and tools on discrete interaction networks are often concerned with Boolean variables,
whereas considering more than two levels is sometimes useful.
Multivalued networks can be converted to partial Boolean maps, in a way that preserves the asynchronous dynamics.
We investigate the problem of extending these maps to non-admissible states, i.e.\ states that do not have a multivalued counterpart.
We observe that attractors are preserved if a stepwise version of the original function is considered for conversion.
Different extensions of the Boolean conversion affect the structure of the interaction graphs in different ways.
A particular technique for extending the partial Boolean conversion is identified, that ensures that feedback cycles are preserved.
This property, combined with the conservation of the asymptotic behaviour,
can prove useful for the application of results and analyses defined in the
Boolean setting to multivalued networks, and vice versa.
As a first application, by considering the conversion of a known example for the discrete multivalued
case, we create a Boolean map showing that the existence of a cyclic attractor and the absence of fixed points
are compatible with the absence of local negative cycles.
We then state a multivalued version of a result connecting mirror states and local feedback cycles.
\end{abstract}

\section{Introduction}
Boolean networks model the dynamics of components taking two possible values, $0$ and $1$, and have a wide
range of applications, having been used for decades to investigate, for instance, biological networks~\cite{glass1973logical,thomas1973boolean,thomas1981relation}.
In some cases, discrete models that allow for more than two values are necessary to capture fundamental properties of the system behaviour~\cite{didier2011mapping,thomas1991regulatory}.
When discrete maps are used to model biological networks, the asynchronous update scheme is often considered,
so that only the value of one component can change at each iteration.
For each state of the system, a local interaction graph can be associated to the discrete network,
which defines how the update rule for the system depends on the variables.
The global interaction graph of the network is then defined by considering all possible local interactions.

Even though the literature on multivalued networks has been expanding in the last few years
(e.g.,~\cite{richard2007necessary,richard2008extension,richard2010negative,chaouiya2011petri}),
many tools focus on the Boolean setting.
To enable a possible application of these techniques to the more general multivalued case,
a mapping of discrete maps to Boolean has been considered,
by defining a Boolean variable for each positive value of each component,
thus embedding multivalued states into a higher dimensional Boolean space~\cite{van1979deal,didier2011mapping}.
The Boolean maps obtained are however defined on a subset of the Boolean configurations,
whereas results and tools on Boolean networks usually require maps to be defined on all Boolean states.
Here we consider possible ways of extending the partial Boolean conversions of multivalued maps
to the states called ``non-admissible''~\cite{didier2011mapping},
i.e.~Boolean states that do not have a discrete counterpart.
We show that, if the stepwise version of the discrete map is converted,
then the attractors of the asynchronous dynamics are preserved (Propositions~\ref{prop:A_trap_domain} and~\ref{prop:nonadm_to_adm}).
In addition, we identify a particular extension of the partial Boolean map that has the following property:
the interaction graph of the Boolean conversion contains a cycle if and only if so does the interaction graph of the original discrete map (Theorems~\ref{thm:circuits_f_to_F} and~\ref{thm:circuits_F_to_f}).

The focus on the preservation of cycles is motivated by the numerous results that have established how the presence or absence of cycles
can determine the possible asymptotic behaviours of Boolean and discrete networks.
If the local interaction graphs do not admit any cycle, then the map has a unique fixed point,
a result due to Shih and Dong~\cite{shih2005combinatorial} for the Boolean case,
and later generalised to the multivalued case by Richard~\cite{richard2008extension}.
Versions of the first and second conjectures of R. Thomas~\cite{thomas1981relation} have been proved: the existence of multiple attractors requires the presence of a local positive cycle~\cite{remy2008graphic,richard2007necessary},
whereas cyclic attractors are precluded in absence of negative cycles in the global interaction graph~\cite{remy2008graphic,richard2010negative}.
The question of whether a local negative cycle is necessary for the existence of a cyclic attractor
has been answered first with a counterexample in the multivalued case by Richard~\cite{richard2010negative}.
Ruet~\cite{ruet2016local} presented alternative proofs and generalisations for the results connecting
attractors and local cycles in the Boolean case, and showed that
local negative cycles, if they exist, might be found away from the cyclic attractors.
The first examples of maps with a cyclic attractor and no local negative cycle in the Boolean case have been exhibited by Ruet~\cite{ruet2017negative}, for $n\geq 7$.
Here we apply the conversion method from multivalued to Boolean to the multivalued counterexample of Richard,
and exhibit a simple example of map on $6$ Boolean components
having no fixed points and no local negative cycles (Section~\ref{sec:ex6}).
An alternative Boolean conversion for the same map was recently presented by Faur{\'e} and Kaji~\cite{faure2018circuit};
a comparison to their conversion method is given in Section~\ref{sec:comparison}.
As a second application of the conversion, we give a multivalued version of the result of Ruet~\cite{ruet2016local} on the existence of local cycles
in presence of mirror pairs (Theorem~\ref{thm:mirror_multi}).

Finally, in Section~\ref{sec:circuits_admissible}, we study additional properties of the conversion on the admissible states,
considering in particular other interaction graphs introduced in the literature to investigate multistationarity and oscillations in discrete systems.

\section{Background}\label{sec:background}

We consider a system of $n$ interacting components, with the $i^{th}$ component taking values in a finite interval of integers $X_i=\{0,\dots,m_i\}$,
and write $\X=\prod_{i=1}^n\X_i$.
We denote by $\e^i$ the element of $\X$ with $\e^i_i=1$ and $\e^i_j=0$ for $j=1,\dots,n$, $j\neq i$, and write $\B=\{0,1\}$.
For $x\in\B^n$ and $I\subseteq\NN{n}$, $\bar{x}^I$ will denote the element of $\B^n$ with $(\bar{x}^I)_i=1-x_i$ for $i\in I$ and $(\bar{x}^I)_i=x_i$ for $i\notin I$.
If $I$ consists of one element $I=\{i\}$, we will write $\bar{x}^i$ for $\bar{x}^{\{i\}}$.
In this work we will consider maps from some set $Y\subseteq X$ to itself, and call the elements of $Y$ the states of the system.
Given $f\colon Y\to Y$, we will write $f_i$ for the map from $Y$ to $X_i$ defined by $f_i(x)=x_i$ for $x\in Y$.

The dynamics defined by the iteration of a map $f\colon Y\to Y$, $Y\subseteq X$, is called \emph{synchronous dynamics}.
In application areas such as gene regulatory networks, the graph called \emph{asynchronous dynamics}
or \emph{asynchronous state transition graph} $AD_f$ is often investigated.
This is the graph with vertex set $Y$, and edge set $\{(x,y) | x,y \in Y, y=x+s\e^i, s = \sign(f_i(x)-x_i), s\neq 0, i=1,\dots,n\}$,
where we define $\sign(a)=0$ for $a=0$, and $\sign(a)=a/|a|$ for $a\in\mathbb{Z}$, $a\neq 0$.

Of particular interest are the asymptotic behaviours of asynchronous dynamics, and their relationships
to the interaction structure encoded by the map $f$.
A non-empty subset $A\subseteq Y$ is a \emph{trap set} for $AD_f$ if, for every edge $(x,y)\in AD_f$, $x\in A$ implies $y\in A$.
The minimal trap sets with respect to the inclusion are called \emph{attractors} for the asynchronous dynamics.
Attractors consisting of a single state are called \emph{stable states} or \emph{fixed points},
and the other attractors are called \emph{cyclic}.

\begin{definition}\label{def:int_graph}
  Given a map $f: Y\to Y$, with $Y\subseteq X$, the \emph{(local) interaction graph} of $f$ at $x \in Y$ is the finite labelled directed graph $G_f(x)$ with vertex set $\{1, \dots, n\}$,
  and an edge from $j$ to $i$,
  with label $s = s_1 (\sign(f_i(x + s_1 \e^j) - f_i(x))$, for $s_1 \in {\{-1, 1\}}$ and $x + s_1 \e^j \in Y$, whenever $s \neq 0$.
  We will call $s$, $s_1$ and $j$ the \emph{sign}, \emph{variation} and \emph{direction} of the edge, respectively.

  The \emph{global interaction graph} $G_f$ of $f$ is the graph with vertex set $\{1,\dots,n\}$
  and an edge from $j$ to $i$ with sign $s$ if, for some $x\in Y$, the graph $G_f(x)$
  has an edge from $j$ to $i$ with sign $s$.
\end{definition}

The cycles we consider in this work are all elementary, unless otherwise stated.
Cycles of length $1$ will be called \emph{loops}.
The sign of a cycle is defined as the product of the signs of its edges.
For an edge $(j,i)$ of sign $s$ in an interaction graph will also use the notations $j\to i$ and $j\xrightarrow{s} i$.

In the Boolean case, the asynchronous dynamics $AD_f$ uniquely determines the map $f$; this is not the case in the multivalued setting.
Given a map $f\colon\X\to\X$, we will consider a particular map $\tf\colon\X\to\X$,
called the \emph{stepwise} map associated to $f$, defined by
\begin{equation*}
  \tf_i(x) = x_i + \sign(f_i(x)-x_i), \text{ for } x\in\X, \ i=1,\dots,n.
\end{equation*}
A map $f$ will be called \emph{stepwise} if it coincides with its stepwise version $\tf$.
It is easy to see that $f$ and $\tf$ admit the same asynchronous dynamics,
and that a stepwise map is uniquely determined by its asynchronous dynamics.

Maps admitting the same asynchronous dynamics can have different interaction graphs,
as shown in the following example.

\begin{example}
  Consider the map $f\colon\{0,1,2\}^2\to\{0,1,2\}^2$ defined by $f(0,1)=f(0,2)=(2,2)$, and $f(x)=(1,2)$ for $x\neq (0,1)$, $x\neq (0,2)$.
  The graph $G_f$ admits a negative loop $1\to 1$ and a positive edge $2\to 1$.
  The stepwise version $\tf$ of $f$ verifies $\tf(0,0)=\tf(1,0)=\tf(2,0)=(1,1)$, $\tf(x)=(1,2)$ otherwise.
  Hence the graph $G_{\tf}$ consists of a positive loop $2\to 2$, and $G_f$ and $G_{\tf}$ have no edges in common.
\end{example}

One can easily verify, however, that if a local interaction graph $G_{\tf}(x)$ of the stepwise version of a map $f$
contains an edge from $j$ to $i$ with sign $s$, with $j\neq i$, then the local interaction graph $G_f(x)$ also contains an edge
from $j$ to $i$, with sign $s$.

Tools for the analysis of discrete interaction networks often focus on the Boolean case,
and therefore a mapping from multivalued to Boolean dynamics can prove useful.
Didier et al.~\cite{didier2011mapping} considered the mapping introduced by Van Ham~\cite{van1979deal},
and showed that it is the only mapping that can preserve both the interaction structure and the dynamical properties of the system.
The mapping is defined by considering a one-to-one correspondence between the multivalued space $\X$ and a subset of the Boolean space with dimension $m=\sum_{i=1}^nm_i$.
For convenience, we will index the components of $\B^m$ using pairs of indices $I=\{(i,j)|i=1,\dots,n,j=1,\dots,m_i\}$, and set $I_i=\{(i,j)|j=1,\dots,m_i\}$ for $i=1,\dots,n$.
Define $m$ functions $\b_{i,j}:\X\to\B$ as $\b_{i,j}(x)=\chi_{[j,m_i]}(x_i)$ for $i=1,\dots,n$, $j=1,\dots,m_i$, where $\chi_A$ is the indicator function of the set $A$.
Then the map
\begin{equation*}
  \begin{aligned}
    \b\colon\X&\to\B^m\\
    x&\mapsto (\b_{1,1}(x),\b_{1,2}(x),\dots,\b_{1,m_1}(x),\b_{2,1}(x),\dots,\b_{n,m_n}(x))
  \end{aligned}
\end{equation*}
is injective. The image $\A=\b(\X)$ is called the \emph{admissible region} or the set of \emph{admissible states}.
These are the states $y\in\B^m$ such that, if $y_{i,j}=1$ for some $(i,j)\in I$, then $y_{i,h}=1$ for all $h<j$.
Given a multivalued discrete dynamics $f$ on $\X$, we call a \emph{Boolean conversion} of $f$
any map $F: \B^m \rightarrow \B^m$ defined so that the following diagram is commutative:
\begin{center}
  \begin{tikzcd}
  \X \arrow[r, "f"] \arrow[d, "\b"]
  & \X \arrow[d, "\b"] \\
  \B^m \arrow[r, "F"] & \B^m
  \end{tikzcd}
\end{center}
Didier et al.~\cite{didier2011mapping} study the map $f^b= \b \circ f \circ \b^{-1}\colon\A\to\A$,
that we call \emph{partial Boolean conversion}.
They show in particular that $\b$ defines an isomorphism between the asynchronous state transition graph of $f$
and the asynchronous state transition graph of $f^b$, which is defined on the admissible region~\citep[Proposition 4]{didier2011mapping}.
Many tools and results on Boolean networks, however, can not be applied directly to the
partial Boolean conversion $f^b$.
In the following sections we consider the problem of defining the Boolean map $F$ outside of the admissible region,
without disrupting the asymptotic behaviour and the interaction structure.

\section{Preservation of asymptotic behaviour}\label{sec:asympt}

We want to be able to study a Boolean conversion $F$ of $f$ to derive properties of the multivalued map $f$.
Ideally, the dynamics of $F$ and $f$ should have the same asymptotic behaviour.
The asynchronous dynamics of $F$ on the admissible states coincides with the asynchronous dynamics of
the partial Boolean conversion $f^b$ and is therefore isomorphic to the asynchronous dynamics of $f$.
However, when the map $f^b$ is extended to non-admissible states, the asynchronous dynamics can leave the admissible region.
Consider for instance a map $f$ on $\X=\{0,1,2\}$ that sends $0$ to $2$: any Boolean conversion $F$ of $f$ verifies $F(0,0)=(1,1)$,
and the asynchronous dynamics $AD_F$ contains a transition from the state $(0,0)$ to the admissible state $(1,0)$,
as well as a transition from the state $(0,0)$ to the non-admissible state $(0,1)$.
This problem can be avoided by considering the conversion of the stepwise version of $f$ instead, as shown in the following proposition.

\begin{proposition}\label{prop:A_trap_domain}
  Let $F\colon\Y\to\Y$ be a Boolean conversion of a stepwise map $f\colon\X\to\X$.
  Then the set of admissible states $\A$ is a trap domain for $AD_{F}$.
\end{proposition}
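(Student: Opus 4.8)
The plan is to take an arbitrary edge $(x,y)$ of $AD_F$ with $x\in\A$ and verify directly that $y$ is again admissible. First I would write $x=\b(a)$ for the unique $a\in\X$ with $\b(a)=x$, and use the commutativity of the defining diagram to compute $F(x)=\b(f(a))$. Setting $b=f(a)$, an asynchronous transition out of $x$ flips a single coordinate $(i,j)\in I$ at which $F$ and $x$ disagree, that is, at which $\b_{i,j}(b)\neq\b_{i,j}(a)$, producing $y=\bar{x}^{(i,j)}$.

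The decisive step is to exploit the stepwise hypothesis. Since $f$ is stepwise, $b_i=a_i+\sign(f_i(a)-a_i)$ differs from $a_i$ by at most $1$. Recalling that $\b_{i,j}(a)=1$ precisely when $a_i\ge j$, I would then compare $\b(a)$ and $\b(b)$ block by block: within block $i$ the two staircases coincide unless $b_i=a_i\pm 1$, in which case they differ in exactly one coordinate, namely $(i,a_i+1)$ when $b_i=a_i+1$, or $(i,a_i)$ when $b_i=a_i-1$. Consequently the flipped coordinate $(i,j)$ is necessarily this single boundary coordinate of its block.

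It then remains to observe that flipping precisely this boundary coordinate carries the staircase encoding $a_i$ to the staircase encoding $a_i+1$ or $a_i-1$, while leaving every other block untouched; equivalently $y=\b(a+\e^i)$ or $y=\b(a-\e^i)$, so in either case $y\in\A$. As this holds for every edge leaving an admissible state, $\A$ is a trap domain for $AD_F$. I expect the only real subtlety to be the block-by-block comparison of $\b(a)$ and $\b(b)$: the point is that the stepwise condition is exactly what confines the change in each component to a single, topmost coordinate, so that an asynchronous update cannot break the monotone staircase pattern. This is also what the $0\mapsto 2$ example preceding the statement illustrates can fail without the stepwise assumption, since there the flip of the second coordinate of $\b(0)$ already leaves the admissible region.
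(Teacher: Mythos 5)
Your proposal is correct and follows essentially the same route as the paper's own proof: both use the commutativity $F(\b(a))=\b(f(a))$ together with the stepwise hypothesis to confine the change in block $i$ to the single boundary coordinate ($(i,a_i+1)$ or $(i,a_i)$), and then identify the successor state as $\b(a\pm\e^i)$, hence admissible. The block-by-block staircase comparison you flag as the only subtlety is exactly the content of the paper's identities $\b(x+\e^i)=\b(x)+\e^{i,x_i+1}$ and $\b(x-\e^i)=\b(x)-\e^{i,x_i}$.
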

\begin{proof}
Let $a$ be an admissible state of $\Y$. Then $a = \b(x)$ for some $x \in \X$.
Suppose that, for some indices $i, j$, with $i \in \{1, \dots, n\}$ and $j \in \{1, \dots, m_i\}$, we have $F_{i,j}(a) \neq a_{i,j}$, or, in other words, $(a, \bar{a}^{i,j})$ is in $AD_{F}$.
We want to prove that $\bar{a}^{i,j}$ is admissible.
We show that $\bar{a}^{i,j} = \b(x + \epsilon \e^i)$, with $\epsilon = \sign(f_i(x) - x_i)$.

First observe that, since $F_{i,j}(a) \neq a_{i,j}$, we have $\bar{a}_{i,j}^{i,j} = F_{i,j}(a) = f^b_{i,j}(a) = \b_{i,j}(f(x)) = \b_{i,j}(x+\mathrm{sign}(f_i(x)-x_i)\e^i) = \b_{i,j}(x + \epsilon \e^i)$.

On the other hand, it follows from the definition of $\b$ that
\begin{equation*}
  \b(x+\e^i)=\b(x)+\e^{i,x_i+1} \text{ and } \b(x-\e^i)=\b(x)-\e^{i,x_i}.
\end{equation*}
As a consequence, we have that $x_i=j-1$ if $\epsilon=+1$, and $x_i=j$ if $\epsilon=-1$.
In both cases we find $\bar{a}^{i,j}_{k,h}=\b_{k,h}(x+\epsilon\e^i)$ for all $(k,h)\neq(i,j)$.
\end{proof}

It follows from the proposition that, if $F$ is a Boolean conversion of the stepwise version of a map $f\colon\X\to\X$,
to obtain a one-to-one correspondence between the attractors of $AD_f$ and the attractors of $AD_{F}$,
it is sufficient to ensure that, for each state in $\Y\setminus\A$, there exists a path to the set of admissible states $\A$.
If this additional condition is verified, we call the map $F$ a \emph{compatible} conversion of $f$.
The following proposition gives a sufficient condition: if the non-admissible states are mapped to the admissible,
and the map is, in a sense, stepwise also outside of the admissible region, then $F$ is a compatible conversion of $f$.

\begin{proposition}\label{prop:nonadm_to_adm}
Consider a multivalued map $f: \X \rightarrow \X$ and a Boolean conversion $F:\Y\to\Y$ that satisfies $F(x)\in\A$ and $|\sum_{j=1}^{m_i}F_{i,j}(x)-\sum_{j=1}^{m_i}x_{i,j}|\leq 1$ for all $x\in\Y$ and $i=1,\dots,n$.
Then for each $x\in\Y$ there exists a path from $x$ to $\A$ in $AD_{F}$.
\end{proposition}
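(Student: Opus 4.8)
The plan is to exhibit, from any non-admissible $x$, an explicit finite descending path in $AD_{F}$ that terminates in $\A$ (if $x\in\A$ the empty path suffices). The naive first attempt---walking from $x$ straight towards the admissible target $F(x)$---fails because the image under $F$ changes at every state we pass through, so the Hamming distance to a \emph{fixed} target need be neither monotone nor even decreasable along an enabled transition. I will instead descend an \emph{intrinsic} potential that does not refer to the moving target $F(x)$.

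For a state $x\in\Y$ and a block $i$, let $c_i(x)$ count the inversions of block $i$, i.e.\ the pairs $(h,j)$ with $h<j$, $x_{i,h}=0$ and $x_{i,j}=1$, and set $\Phi(x)=\sum_{i=1}^n c_i(x)$. By the description of $\A$ recalled in Section~\ref{sec:background}, a block is monotone exactly when it has no inversion, so $\Phi(x)=0$ if and only if $x\in\A$, and $\Phi$ takes values in the non-negative integers. It therefore suffices to prove the following local statement: if $x\notin\A$, then $AD_{F}$ has an edge $(x,\bar{x}^{i,j})$ with $\Phi(\bar{x}^{i,j})<\Phi(x)$. Greedily following such edges produces a strictly decreasing sequence of potentials, which must reach $0$, hence $\A$, in finitely many steps.

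To prove the local statement I will use the hypothesis $F(x)\in\A$ to read off the enabled transitions block by block. Fix a block $i$ that is non-monotone (one exists since $x\notin\A$), and write $k=\sum_{j=1}^{m_i}F_{i,j}(x)$; admissibility of $F(x)$ means that within block $i$ the value $F_{i,j}(x)$ equals $1$ for $j\le k$ and $0$ for $j>k$. Consequently the enabled single-coordinate transitions inside block $i$ are exactly: switching a $0$ to $1$ at some position $j\le k$, and switching a $1$ to $0$ at some position $j>k$. Let $p$ be the position of the first $0$ and $q$ the position of the last $1$ in block $i$; non-monotonicity is precisely $p<q$. A direct count of inversions shows that switching on position $p$ changes $c_i$ by (zeros before $p$) minus (ones after $p$) $=0-(\text{ones after }p)<0$, and that switching off position $q$ changes $c_i$ by (ones after $q$) minus (zeros before $q$) $=0-(\text{zeros before }q)<0$; either flip strictly decreases $\Phi$.

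It remains to check that at least one of these two beneficial flips is \emph{enabled}, and this is the crux. The flip at $p$ is enabled iff $p\le k$, and the flip at $q$ iff $q>k$. If both failed we would have $q\le k<p$, forcing positions $1,\dots,k$ to be all $1$ and positions $k+1,\dots,m_i$ to be all $0$; that is, block $i$ would be the monotone prefix of length $k$, contradicting $p<q$. Hence at least one beneficial, enabled flip exists, which proves the local statement and completes the argument. I expect the main obstacle to be exactly this coupling between ``which flips decrease the potential'' and ``which flips are permitted by $F$'': the admissibility hypothesis $F(x)\in\A$ is what aligns the two, by pinning down the prefix shape of the image in each block, while the counting bound $|\sum_{j}F_{i,j}(x)-\sum_{j}x_{i,j}|\le 1$ is the second ingredient that makes $F$ a stepwise-type conversion in the sense of Proposition~\ref{prop:A_trap_domain}.
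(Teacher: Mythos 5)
Your proof is correct, but it follows a genuinely different route from the paper's. The paper argues by induction on the Hamming distance $d(x,\A)$ to the admissible region: using \emph{both} hypotheses (admissibility of $F(x)$ and the bound $|\sum_{j}F_{i,j}(x)-\sum_{j}x_{i,j}|\leq 1$), it locates in a bad block a coordinate $j_1$ or $j_2$ whose flip is enabled and lands exactly one step closer to $\A$, so the resulting path has length exactly $d(x,\A)$. You instead descend an intrinsic potential, the total inversion count $\Phi$, and your dichotomy (``first zero at $p\le k$ or last one at $q>k$, else $q\le k<p$ contradicts $p<q$'') uses only the hypothesis $F(x)\in\A$; the counting bound plays no role in your argument, despite your closing remark suggesting it is a ``second ingredient.'' This means you have actually proved a slightly stronger statement (the reachability of $\A$ holds for any conversion whose image lies in $\A$), at the cost of a possibly longer path, since a single flip can destroy several inversions at once and $\Phi(x)$ can exceed $d(x,\A)$. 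The paper's argument buys the sharper quantitative conclusion that $\A$ is reachable in exactly $d(x,\A)$ steps and keeps the bookkeeping in terms of the sets $J$ realizing the Hamming distance; yours buys a cleaner case analysis and exposes which hypothesis is really doing the work. Both are valid; it would be worth deleting or correcting your final sentence about the counting bound, since as written it misattributes a role to a hypothesis your proof never invokes.
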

\begin{proof}
We proceed by induction on the Hamming distance $d(x,\A)$. If $d(x,\A)=0$, then $x$ is in $\A$ and there is nothing to prove.
If $d(x,\A)>0$, we show that $x$ admits a successor $y\in\Y$ in the asynchronous dynamics that satisfies $d(y,\A)=d(x,\A)-1$,
and the conclusion follows by the induction hypothesis.
Since $x$ is not admissible, there exists a pair of indices $i,j$ such that $x_{i,j}=0$ and $x_{i,j+1}=1$.
Define $j_1=\min\{1\leq j\leq m_i|x_{i,j}=0\}$, $j_2=\max\{1\leq j\leq m_i|x_{i,j}=1\}$.
Observe that, if $J\subseteq I$ is such that $\bar{x}^J\in\A$, then $(i,j_1)\in J$ or $(i,j_2)\in J$.
In addition, since the distance between $\sum_{j=1}^{m_i}F_{i,j}(x)$ and $\sum_{j=1}^{m_i}x_{i,j}$ is at most $1$ and $F(x)$ is admissible,
we have $F_{i,j}(x)=1$ for all $j<j_1$, $F_{i,j}(x)=0$ for all $j>j_2$, and
\begin{equation}\label{hp:dist}
\left| \sum_{k=j_1}^{j_2}(F_{i,k}(x)-x_{i,k})\right|\leq 1.
\end{equation}
Consider first the case with $F_{i,j_2}(x)=1$, which implies $F_{i,k}(x)=1$ for all $k\leq j_2$.
Using~\eqref{hp:dist}, we find that $x_{i,k}=1$ for all $j_1<k\leq j_2$.
Hence, if $J$ is a set of indices in $I$ such that $\bar{x}^J\in\A$ and $|J|=d(x,\A)$, the cardinality of $J\cap I_i$ is exactly one,
and $y=\bar{x}^{i,j_1}$ is a successor for $x$ in $AD_{F}$ that satisfies $d(y,\A)=d(x,\A)-1$.

Suppose now that $F_{i,j_2}(x)=0$.
If $(i,j_2)$ belongs to a set of indices $J$ such that $|J|=d(x,\A)$ and $\bar{x}^J\in\A$, then $\bar{x}^{i,j_2}$ is a successor for $x$ with $d(y,\A)=d(x,\A)-1$.
Otherwise, for any set of indices $J$ with $|J|=d(x,\A)$ and $\bar{x}^J\in\A$, we must have $(i,j_1)\in J$, and,
in addition, there must be an index $h$ with $j_1<h<j_2$ such that $x_{h}=1$.
Using~\ref{hp:dist}, we find that $F_{i,j_1}(x)=1$, and $\bar{x}^{i,j_1}$ is the required successor.
\end{proof}

\section{Preservation of cycles in the interaction graphs}\label{sec:circuits}

We now move our attention to the interaction structure of $f$ and of its Boolean conversions.
We start by establishing some relationships between the edges in the interaction graphs of $f$
and the edges in the interaction graphs of $f^b$.

\begin{lemma}\label{lem:int_graph_f_to_fb}
    Consider $f\colon\X\to\X$ and $x\in\X$.
    If the graph $G_{f}(x)$ contains an edge from $j$ to $i$ with sign $s$, variation $s_1$ and direction $j$,
    then the graph $G_{f^b}(\b(x))$ contains edges with sign $s$ from the vertex $(j, x_j + \frac{s_1+1}{2}) \in I_j$
    to vertices $(i,k') \in I_i$, for all $k' \in \left] \min{\{f_i(x),f_i(x + s_1\e^j)\}}, \max{\{f_i(x),f_i(x + s_1\e^j)\}}\right]$.
\end{lemma}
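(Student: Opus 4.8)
The plan is to apply Definition~\ref{def:int_graph} directly to the map $f^b\colon\A\to\A$ at the state $a=\b(x)$, using the explicit form $f^b=\b\circ f\circ\b^{-1}$ together with the threshold description $\b_{i,k'}(y)=\chi_{[k',m_i]}(y_i)$ of the coordinates of $\b$. The first step is to pin down the source vertex and the variation of the sought edges. Writing $j^\ast=x_j+\tfrac{s_1+1}{2}$, so that $j^\ast=x_j+1$ when $s_1=+1$ and $j^\ast=x_j$ when $s_1=-1$, I would use the identities $\b(x+\e^j)=\b(x)+\e^{j,x_j+1}$ and $\b(x-\e^j)=\b(x)-\e^{j,x_j}$, which follow directly from the definition of $\b$, to check that flipping the coordinate $(j,j^\ast)$ of $a$ with variation $s_1$ produces exactly $a+s_1\e^{j,j^\ast}=\b(x+s_1\e^j)$. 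Concretely, $a_{j,j^\ast}=0$ when $s_1=+1$ and $a_{j,j^\ast}=1$ when $s_1=-1$, so in each case the Boolean neighbour obtained by moving the $(j,j^\ast)$ coordinate in the direction $s_1$ is the image under $\b$ of $x+s_1\e^j$; in particular this neighbour lies in $\A$, which is exactly the admissibility condition required by Definition~\ref{def:int_graph} for an edge out of $(j,j^\ast)$ to be defined.

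The second step is to evaluate the relevant coordinate of $f^b$ at the two admissible states. Since $f^b_{i,k'}(\b(y))=\b_{i,k'}(f(y))=\chi_{[k',m_i]}(f_i(y))$, setting $p=f_i(x)$ and $q=f_i(x+s_1\e^j)$ gives
\[
f^b_{i,k'}(a+s_1\e^{j,j^\ast})-f^b_{i,k'}(a)=\chi_{[k',m_i]}(q)-\chi_{[k',m_i]}(p).
\]
This difference is nonzero precisely when exactly one of $p,q$ is at least $k'$, i.e.\ when $\min\{p,q\}<k'\leq\max\{p,q\}$, which is exactly the claimed index range $\left]\min\{f_i(x),f_i(x+s_1\e^j)\},\max\{f_i(x),f_i(x+s_1\e^j)\}\right]$; and in that range its value equals $\sign(q-p)$.

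The final step combines the two computations. By Definition~\ref{def:int_graph}, using the variation $s_1$, the sign of the edge from $(j,j^\ast)$ to $(i,k')$ in $G_{f^b}(a)$ is
\[
s_1\,\sign\!\left(f^b_{i,k'}(a+s_1\e^{j,j^\ast})-f^b_{i,k'}(a)\right)=s_1\,\sign(q-p)=s_1\,\sign\!\left(f_i(x+s_1\e^j)-f_i(x)\right)=s,
\]
which is the sign of the original edge in $G_f(x)$ and is nonzero by hypothesis. Hence for every $k'$ in the stated interval the edge $(j,j^\ast)\to(i,k')$ of sign $s$ is present in $G_{f^b}(a)$, as required.

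The only delicate point is the bookkeeping in the first step: one must verify that the prescribed vertex $(j,x_j+\tfrac{s_1+1}{2})$ and the variation $s_1$ are forced by the requirement that the flipped state remain admissible, and that the two cases $s_1=\pm1$ both collapse to the single identity $a+s_1\e^{j,j^\ast}=\b(x+s_1\e^j)$. Once this correspondence between Boolean flips at $a$ and discrete moves at $x$ is established, the range and sign computations reduce to routine evaluations of the indicator functions $\chi_{[k',m_i]}$.
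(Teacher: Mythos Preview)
Your proof is correct and follows essentially the same approach as the paper's: identify the Boolean coordinate $(j,x_j+\tfrac{s_1+1}{2})$ whose flip with variation $s_1$ corresponds under $\b$ to the move $x\mapsto x+s_1\e^j$, then evaluate the difference $f^b_{i,k'}(a+s_1\e^{j,j^\ast})-f^b_{i,k'}(a)$ via the indicator description of $\b_{i,k'}$. Your version is in fact more explicit than the paper's, spelling out the admissibility check and the range of $k'$ for which the indicator difference is nonzero, whereas the paper compresses these into a single equality chain.
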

\begin{proof}
    Suppose that $s = s_1 \sign(f_i(x+s_1\e^j)-f_i(x))$ for some $s, s_1 \in \{-1,1\}$, $i, j \in \{1,\dots,n\}$.
    First observe that, if $x$ and $x + s_1\e^{j}$ are in $\X$, and $y = \b(x)$, then $y + s_1\e^{j,k} = \b(x + s_1\e^j)$, with $k= x_j + \frac{s_1+1}{2}$.

    Take $k= x_j + \frac{s_1+1}{2}$, and $k' \in \left] \min{\{f_i(x),f_i(x + s_1\e^j)\}}, \max{\{f_i(x),f_i(x + s_1\e^j)\}}\right]$.
    We have
    \begin{equation*}
        \begin{aligned}
            s_1 s & = \sign(f_i(x+s_1\e^j)-f_i(x)) \\
                  & = \sign(\b_{i,k'}(f(x+s_1\e^j))-\b_{i,k'}(f(x))) = \sign(f^b_{i,k'}(y+s_1\e^{j,k}) - f^b_{i,k'}(y)),
        \end{aligned}
    \end{equation*}
    as required.
\end{proof}

\begin{lemma}\label{lem:int_graph_fb_to_f}
    Consider $f\colon\X\to\X$ and $x\in\X$.
    If the graph $G_{f^b}(\b(x))$ contains an edge from a vertex in
    $I_{j}$ to a vertex in $I_{i}$ with sign $s$, with $i, j \in \{1,\dots,n\}$,
    then the graph $G_{f}(x)$ contains an edge from $j$ to $i$ with sign $s$.
\end{lemma}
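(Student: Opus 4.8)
The plan is to run the argument of Lemma~\ref{lem:int_graph_f_to_fb} in reverse, exploiting the fact that $f^b$ is defined only on $\A$. Suppose $G_{f^b}(\b(x))$ has an edge from a vertex $(j,k)\in I_j$ to a vertex $(i,k')\in I_i$ with sign $s\neq 0$. By Definition~\ref{def:int_graph} there is a variation $s_1\in\{-1,1\}$ with $\b(x)+s_1\e^{j,k}$ lying in the domain $\A$ of $f^b$, and with $s=s_1\sign\bigl(f^b_{i,k'}(\b(x)+s_1\e^{j,k})-f^b_{i,k'}(\b(x))\bigr)$.

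The key step, and the only place where the structure of $\A$ really enters, is to determine which perturbations $\b(x)+s_1\e^{j,k}$ lie in $\A$. Since $\b(x)$ has ones exactly in positions $(j,1),\dots,(j,x_j)$ of the block $I_j$, flipping $(j,k)$ upward preserves admissibility only for $k=x_j+1$ (and requires $x_j<m_j$), while flipping it downward preserves admissibility only for $k=x_j$ (and requires $x_j>0$). In both admissible cases $k=x_j+\frac{s_1+1}{2}$ and, exactly as recorded in the proof of Lemma~\ref{lem:int_graph_f_to_fb}, $\b(x)+s_1\e^{j,k}=\b(x+s_1\e^j)$ with $x+s_1\e^j\in\X$. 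This is what pins down the single-index direction $j$ and variation $s_1$ underlying the Boolean edge.

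Having identified the perturbation, I would substitute $f^b=\b\circ f\circ\b^{-1}$ to rewrite the sign as $s=s_1\sign\bigl(\b_{i,k'}(f(x+s_1\e^j))-\b_{i,k'}(f(x))\bigr)$. Because $\b_{i,k'}(z)=\chi_{[k',m_i]}(z_i)$ is nondecreasing in the $i$-th coordinate, the difference of indicators, when nonzero, has the same sign as $f_i(x+s_1\e^j)-f_i(x)$; and it is nonzero here precisely because the edge has $s\neq 0$. Hence $s=s_1\sign\bigl(f_i(x+s_1\e^j)-f_i(x)\bigr)\neq 0$, which by Definition~\ref{def:int_graph} is exactly the condition for $G_f(x)$ to carry an edge from $j$ to $i$ with sign $s$ and variation $s_1$. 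The argument is uniform in whether or not $i=j$, so loops are covered as well.

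I expect the only genuine obstacle to be the admissibility bookkeeping of the second paragraph: one must verify that no other choice of $k$ within the block $I_j$ yields an admissible neighbour, so that every Boolean edge leaving $\b(x)$ really corresponds to a unit step $x\mapsto x\pm\e^j$ in $\X$. Once that is established, the sign computation is a direct mirror of Lemma~\ref{lem:int_graph_f_to_fb}.
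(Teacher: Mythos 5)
Your proposal is correct and follows essentially the same route as the paper: identify the admissible neighbour $\b(x)+s_1\e^{j,k}=\b(x+s_1\e^j)$, rewrite the sign via $f^b=\b\circ f\circ\b^{-1}$, and use monotonicity of $\b_{i,k'}$ to pass from the indicator difference to $\sign(f_i(x+s_1\e^j)-f_i(x))$. The admissibility bookkeeping you flag as the main obstacle is exactly the point the paper asserts without elaboration (that the edge's domain condition forces $y+s_1\e^{j,k}\in\A$ and hence $k=x_j+\frac{s_1+1}{2}$), and your verification of it is sound.
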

\begin{proof}
  Suppose that $y = \b(x)$ for some $x\in\X$, and that $G_{f^b}(y)$ contains an edge from $j,k$ to $i,k'$ with variation $s_1$.
  Then the state $y + s_1\e^{j,k}$ is admissible, and $y + s_1\e^{j,k} = \b(x + s_1\e^j)$. We can therefore write
  \[f^b_{i,k'}(y+s_1\e^{j,k}) - f^b_{i,k'}(y) = \b_{i,k'}(f(x+s_1\e^j))-\b_{i,k'}(f(x)),\]
  and
  \[s = s_1 \sign(f^b_{i,k'}(y+s_1\e^{j,k}) - f^b_{i,k'}(y)) = s_1 \sign(\b_{i,k'}(f(x+s_1\e^j))-\b_{i,k'}(f(x))).\]
  This implies $s = s_1 \sign(f_i(x+s_1\e^j)-f_i(x))$ as required.
\end{proof}

The lemmas show that every interaction associated to $f$ has at least one counterpart in $G_{f^b}$,
and that each interaction in $G_{f^b}$ derives from some interaction in $G_f$.
When considering the interaction graphs of extensions $F$ of the partial Boolean conversion $f^b$, the last property does not hold in general,
even for maps that are compatible conversions of $f$.
Consider for instance the stepwise multivalued map $f\colon\{0,1,2\}\to\{0,1,2\}$ defined by $f(0)=f(1)=0$, $f(2)=1$.
We can define a compatible conversion $F\colon\{0,1\}^2\to\{0,1\}^2$ by setting $F(0,0)=F(1,0)=(0,0)$, $F(1,1)=(1,0)$ and,
for instance, $F(0,1)=(1,1)$. The interaction graph $G_f$ has only a positive edge $1\to 1$, whereas the graph $G_F$ admits also a negative interaction $(1,1)\to(1,2)$.

In the following, we describe a construction that allows to derive a version of Lemma~\ref{lem:int_graph_fb_to_f} for a Boolean conversion.
The existence of a cycle in the interaction graph of a multivalued map
does not imply, in general, the existence of a cycle in the interaction graph of a compatible conversion
(see the example in Figure~\ref{fig:ex_local_not_preserved}).
We will prove this property for the particular Boolean conversion presented here.

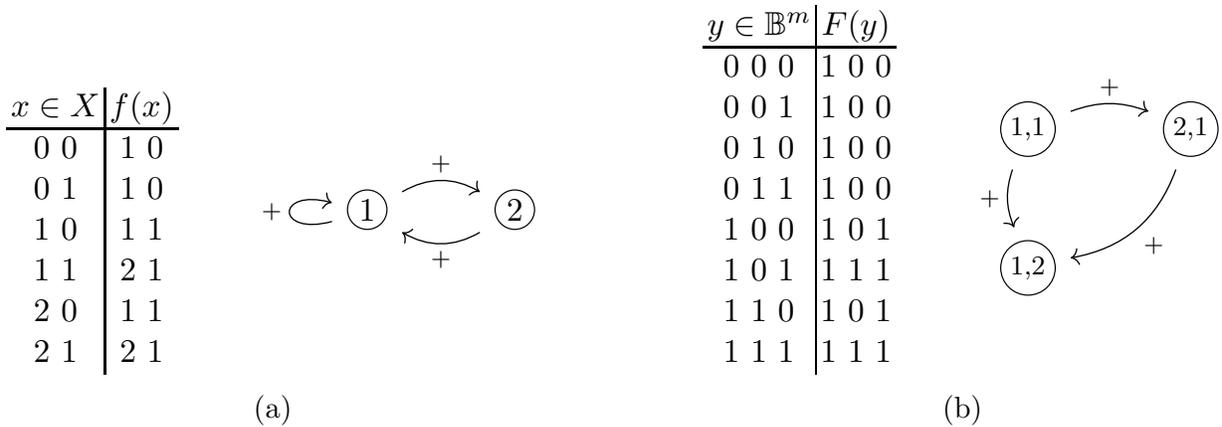
\begin{figure}
  \begin{center}
  \resizebox{\columnwidth}{!}{
\subcaptionbox{\label{fig:ex_local_not_preserved_a}}{
\tabcolsep=0.05cm
\begin{tabular} {c|c}
  $x \in \X$ & $f(x)$ \\ \hline
  0 0 & 1 0 \\
  0 1 & 1 0 \\
  1 0 & 1 1 \\
  1 1 & 2 1 \\
  2 0 & 1 1 \\
  2 1 & 2 1 \\
\end{tabular}
\qquad
\begin{tikzcd}[ampersand replacement=\&]
    \circled{1} \arrow[loop left, "+"] \arrow[r, "+", bend left = 30] \& \circled{2} \arrow[l, bend left = 30, "+"] \\
\end{tikzcd}}
\qquad
\qquad
\subcaptionbox{}{
\tabcolsep=0.05cm
\begin{tabular} {c|c}
    $y \in \Y$ & $F(y)$ \\ \hline
  0 0 0 & 1 0 0 \\
  0 0 1 & 1 0 0 \\
  0 1 0 & 1 0 0 \\
  0 1 1 & 1 0 0 \\
  1 0 0 & 1 0 1 \\
  1 0 1 & 1 1 1 \\
  1 1 0 & 1 0 1 \\
  1 1 1 & 1 1 1 \\
\end{tabular}
\qquad
\begin{tikzcd}[ampersand replacement=\&]
    \circled{\footnotesize{1,1}} \arrow[r, bend left = 20, "+"] \arrow[d, bend right = 20, "+"'] \& \circled{\footnotesize{2,1}} \arrow[ld, bend left = 30, "+"] \\
    \circled{\footnotesize{1,2}} \&
\end{tikzcd}}}
\caption{(a): Multivalued map on $X=\{0,1,2\}\times\{0,1\}$ with interaction graph admitting a local positive cycle.
  (b): Compatible conversion $F$ of $f$, with interaction graph that does not admit any cycle.}\label{fig:ex_local_not_preserved}
  \end{center}
\end{figure}

Consider the function $\psi\colon\Y\to\A$ defined as follows. For each $i=1,\dots,n$ and $j=1,\dots,m_i$, set 
\begin{equation}\label{eq:def_psi}
  \psi_{i,j}(y_{1,1}, \dots, y_{1,m_1}, y_{2, 1}, \dots, y_{n, m_n}) = \chi_{[j,m_i]}\left(\sum_{k=1}^{m_i}y_{i,k}\right).
\end{equation}
The map $\psi$ sends a state $y$ to the admissible state $z$ such that $\sum_{j=1}^{m_i}y_{i,j} = \sum_{j=1}^{m_i}z_{i,j}$ for each $i=1,\dots,n$.
For example, taking $n=3$, $m_1=3$, $m_2 = m_3 = 2$, and $y=(0,1,1,1,0,0,1)\in\{0,1\}^7$ we have $\psi(y)=(1,1,0,1,0,1,0)$,
i.e.\ $\psi(y)$ is the image under $\b$ of the state $(2,1,1)$.

Given a map $f\colon\X\to\X$, consider the Boolean map on $\Y$ defined by
\begin{equation*}\label{eq:def_Fb}
  F = f^b \circ \psi = \b \circ f \circ \b^{-1} \circ \psi.
\end{equation*}
If $f$ is stepwise, then $F$ verifies the hypotheses of propositions~\ref{prop:A_trap_domain} and~\ref{prop:nonadm_to_adm}
and is therefore a compatible conversion of $f$.

\begin{lemma}\label{lemma:f_to_Fb_edge}
  Consider a Boolean conversion $F$ of a map $f\colon\X\to\X$.
  Suppose that the graph $G_{f^b}(x)$ at $x\in\A$ admits an edge $(j,k) \rightarrow (i,h)$ of sign $s$.
  Let $y \in \Y$ be such that $\psi(y)=x$, and $y_{j,t}=x_{j,k}$ for some $t \in \{1,\dots,m_j\}$.
  Then the graph $G_{F}(y)$ admits an edge from $(j,t)$ to $(i,h)$ of sign $s$.
\end{lemma}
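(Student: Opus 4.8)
The plan is to exploit the defining structure $F = f^b\circ\psi$ and reduce the sought edge in $G_F(y)$ directly to the given edge in $G_{f^b}(x)$. The crux is that flipping the single coordinate $(j,t)$ of $y$ is seen by $\psi$ exactly as the admissible flip $x+s_1\e^{j,k}$, so that $F=f^b\circ\psi$ transports the sign unchanged.

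First I would unpack the hypothesis. The edge $(j,k)\to(i,h)$ of sign $s$ in $G_{f^b}(x)$ means there is a variation $s_1\in\{-1,1\}$ with $x+s_1\e^{j,k}\in\A$ and $s = s_1\sign(f^b_{i,h}(x+s_1\e^{j,k}) - f^b_{i,h}(x))\neq 0$. Since $x\in\A$, its $j$-block $x_{j,1},\dots,x_{j,m_j}$ is a prefix of $1$s of some length $p=\sum_k x_{j,k}$, and admissibility of $x+s_1\e^{j,k}$ \emph{forces} the variation from the value $x_{j,k}$: one has $s_1=+1$ exactly when $x_{j,k}=0$ (with $k=p+1$) and $s_1=-1$ exactly when $x_{j,k}=1$ (with $k=p$). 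Because the hypothesis supplies $y_{j,t}=x_{j,k}$, I take the same variation $s_1':=s_1$ as the candidate for the edge in $G_F(y)$; the flip is legitimate since $y_{j,t}+s_1'=x_{j,k}+s_1\in\{0,1\}$, so $y+s_1'\e^{j,t}\in\Y$.

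The key step is to verify $\psi(y+s_1'\e^{j,t})=x+s_1\e^{j,k}$. As $\psi$ acts block-wise and depends on block $i$ only through $\sum_k y_{i,k}$, flipping $(j,t)$ leaves every block $i\neq j$ untouched, so there $\psi(y+s_1'\e^{j,t})$ agrees with $\psi(y)=x$. In block $j$ the sum changes from $p$ to $p+s_1'=p+s_1$, and by \eqref{eq:def_psi} the image is the prefix of $1$s of length $p+s_1$, which is precisely the $j$-block of $x+s_1\e^{j,k}$ (the flip at $k=p+1$ lengthens, and at $k=p$ shortens, the prefix by one). Hence the two states coincide. Substituting into $F=f^b\circ\psi$ and using $\psi(y)=x$ gives $F_{i,h}(y)=f^b_{i,h}(x)$ and $F_{i,h}(y+s_1'\e^{j,t})=f^b_{i,h}(x+s_1\e^{j,k})$, so
\[
s_1'\sign\bigl(F_{i,h}(y+s_1'\e^{j,t}) - F_{i,h}(y)\bigr) = s_1\sign\bigl(f^b_{i,h}(x+s_1\e^{j,k}) - f^b_{i,h}(x)\bigr) = s \neq 0,
\]
which is exactly an edge from $(j,t)$ to $(i,h)$ of sign $s$ in $G_F(y)$.

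The main obstacle is the block-wise identity $\psi(y+s_1'\e^{j,t})=x+s_1\e^{j,k}$ and, underlying it, the observation that admissibility of $x$ pins down $s_1$ from $x_{j,k}$. This is what allows the hypothesis $y_{j,t}=x_{j,k}$ to make a single-coordinate flip of the (otherwise arbitrary) state $y$ reproduce the admissible flip, \emph{regardless of which position} $t$ is chosen inside the $j$-block. Once this is established, the conclusion is a direct substitution into $F=f^b\circ\psi$.
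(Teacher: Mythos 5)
Your proof is correct and takes essentially the same route as the paper's: the central fact in both is the identity $\psi(\bar{y}^{j,t})=\bar{x}^{j,k}$, after which the conclusion follows by direct substitution into $F=f^b\circ\psi$. You simply spell out in more detail the block-wise verification of this identity and the sign bookkeeping, which the paper states without proof.
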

\begin{proof}
  Given $x \in \A$, $i\in\{1,\dots,n\}$ and $k, t \in \{1,\dots,m_i\}$, any state $y \in \Y$ with $\psi(y) = x$ and such that $y_{j,t}=x_{j,k}$
  verifies $\psi(\overline{y}^{j,t}) = \bar{x}^{j,k}$. Therefore we have that
  \begin{align*}
    \frac{F_{i,h}(\bar{y}^{j,t})-F_{i,h}(y)}{\bar{y}^{j,t}_{j,t}-y_{j,t}} & = \frac{f^b_{i,h}(\psi(\bar{y}^{j,t}))-f^b_{i,h}(\psi(y))}{\bar{y}^{j,t}_{j,t}-y_{j,t}}
                                                                              = \frac{f^b_{i,h}(\bar{x}^{j,k})-f^b_{i,h}(x)}{\bar{x}^{j,k}_{j,k}-x_{j,k}}.
  \end{align*}
\end{proof}

Notice that, for each $x\in\Y$, $j\in\NN{n}$ and $k,t\in\NN{m_j}$, we can always find
a state $y\in\Y$ with $y_{j,t}=x_{j,k}$ and $\psi(y)=\psi(x)$,
by setting $y_{j,t}=x_{j,k}$, $y_{j.k}=x_{j,t}$, and $y_{i,h}=x_{i,h}$ for any other pair of indices.

We are now ready to show that, if the interaction graph of the original multivalued map $f$
has a cycle, then a corresponding cycle appears in the interaction graph of the Boolean conversion $F=f^b\circ\psi$.

\begin{theorem}\label{thm:circuits_f_to_F}
  Consider a map $f\colon\X\to\X$, and define $F=f^b\circ\psi\colon\Y\to\Y$.
  \begin{enumerate}[(i)]
    \item If, for some $x\in\X$, the graph $G_f(x)$ admits a positive (resp. negative) cycle, then
          there exists $y\in\Y$ with $\psi(y)=\b(x)$ such that $G_{F}(y)$
          admits a positive (resp. negative) cycle.
    \item If the global interaction graph of $f$ admits a positive (resp. negative) cycle, then
          the global interaction graph of $F$ admits a positive (resp. negative) cycle.
  \end{enumerate}
\end{theorem}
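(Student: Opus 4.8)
The plan is to lift a cycle of $G_f(x)$ to $G_F(y)$ one edge at a time: Lemma~\ref{lem:int_graph_f_to_fb} carries each edge of $G_f$ to an edge of $G_{f^b}$, Lemma~\ref{lemma:f_to_Fb_edge} carries edges of $G_{f^b}$ to edges of $G_F$, and the freedom in the source column supplied by $\psi$ is used to make consecutive Boolean edges share endpoints, so that the lifted edges close up into a cycle of the same sign. Since the sign of a cycle is the product of its edge signs, handling this product uniformly settles both the positive and the negative case at once.

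For part~(i), write the elementary cycle of $G_f(x)$ as $i_1\xrightarrow{s_1}i_2\xrightarrow{s_2}\cdots\xrightarrow{s_\ell}i_1$ with distinct vertices $i_1,\dots,i_\ell$, and let $\sigma_r\in\{-1,1\}$ be the variation of the $r$-th edge (indices read modulo $\ell$). Lemma~\ref{lem:int_graph_f_to_fb} then yields, in $G_{f^b}(\b(x))$, an edge of sign $s_r$ from $(i_r,k_r)$ with $k_r=x_{i_r}+\tfrac{\sigma_r+1}{2}$ to $(i_{r+1},h_r)$, where $h_r$ may be taken to be any index in the nonempty interval $\left]\min\{f_{i_{r+1}}(x),f_{i_{r+1}}(x+\sigma_r\e^{i_r})\},\max\{f_{i_{r+1}}(x),f_{i_{r+1}}(x+\sigma_r\e^{i_r})\}\right]$ of columns of block $i_{r+1}$. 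I fix one such $h_r$ for every $r$ and seek to relocate the source of the $(r{+}1)$-th Boolean edge from column $k_{r+1}$ to column $h_r$, so that the head of the $r$-th edge becomes the tail of the $(r{+}1)$-th.

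By Lemma~\ref{lemma:f_to_Fb_edge}, this relocation is permitted as soon as I exhibit a single $y\in\Y$ with $\psi(y)=\b(x)$ and $y_{i_r,h_{r-1}}=\b_{i_r,k_r}(x)$ for all $r$. A short computation gives $\b_{i_r,k_r}(x)=\tfrac{1-\sigma_r}{2}$, so within block $i_r$ I must place the prescribed bit $\tfrac{1-\sigma_r}{2}$ in column $h_{r-1}$ while keeping the block sum equal to $x_{i_r}$, as required by $\psi(y)=\b(x)$. The validity of the variation $\sigma_r$ guarantees $x+\sigma_r\e^{i_r}\in\X$, hence $x_{i_r}\le m_{i_r}-1$ if $\sigma_r=+1$ and $x_{i_r}\ge 1$ if $\sigma_r=-1$; in either case there is room to set the required bit without disturbing the sum. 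Since the vertices $i_1,\dots,i_\ell$ are distinct, each block is constrained at most once, so these conditions are independent and can be satisfied simultaneously, with the blocks outside the cycle set equal to $\b(x)$. Feeding this $y$ into Lemma~\ref{lemma:f_to_Fb_edge} produces edges $(i_r,h_{r-1})\to(i_{r+1},h_r)$ of sign $s_r$ in $G_F(y)$, which close up into a cycle of $G_F(y)$ — elementary since the blocks are distinct — of sign $\prod_r s_r$, the sign of the original cycle.

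For part~(ii) the same edgewise lifting works, but each edge $i_r\xrightarrow{s_r}i_{r+1}$ of the global graph $G_f$ may now be realised at its own state $x^{(r)}$, so no common $y$ is needed: the observation following Lemma~\ref{lemma:f_to_Fb_edge} lets me realise any desired source column in a separate state $y^{(r)}$, giving an edge $(i_r,h_{r-1})\to(i_{r+1},h_r)$ of sign $s_r$ in $G_F$ for each $r$, and these assemble directly into a cycle of $G_F$ of sign $\prod_r s_r$. I expect the crux to be the simultaneous satisfiability in part~(i): producing one state $y$ meeting every source-column constraint at once, which is precisely where the admissible range of $x_{i_r}$ forced by each variation must be invoked.
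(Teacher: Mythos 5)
Your proposal is correct and follows essentially the same route as the paper's proof: lift the multivalued cycle to edges of $G_{f^b}(\b(x))$ via Lemma~\ref{lem:int_graph_f_to_fb}, then choose a preimage $y$ of $\b(x)$ under $\psi$ whose source-column bits let Lemma~\ref{lemma:f_to_Fb_edge} relocate each tail onto the previous head, closing the cycle with the same sign. The only difference is that you spell out the simultaneous satisfiability of the source-column constraints (one per block, since the cycle is elementary), which the paper leaves to the one-swap-per-block remark following Lemma~\ref{lemma:f_to_Fb_edge}.
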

\begin{proof}
  Consider the case of a local cycle $i_1 \xrightarrow{s_1} i_2 \rightarrow \dots \rightarrow i_k \xrightarrow{s_k} i_1$
  at $G_f(x)$, $x \in \X$.
  By Lemma~\ref{lem:int_graph_f_to_fb}, the graph $G_{f^b}(\b(x))$ admits edges
  \[(i_1,h_1)\xrightarrow{s_1}(i_2,j_2), \ (i_2,h_2)\xrightarrow{s_2}(i_3,j_3), \ \dots, \ (i_{k-1},h_{k-1})\xrightarrow{s_{k-1}}(i_k,j_k), \ (i_k,h_k)\xrightarrow{s_k}(i_1,j_1).\]
  Take $y\in\Y$ such that $\psi(y)=b(x)$ and $y_{i_1,j_1}=b(x)_{i_1,h_1}$,\dots,$y_{i_k,j_k}=b(x)_{i_k,h_k}$.
  By Lemma~\ref{lemma:f_to_Fb_edge} the graph $G_{F}(y)$ contains the cycle
  \[(i_1,j_1)\xrightarrow{s_1}(i_2,j_2)\xrightarrow{s_2}(i_3,j_3)\rightarrow \dots \rightarrow(i_{k-1},j_{k-1})\xrightarrow{s_{k-1}}(i_k,j_k)\xrightarrow{s_k}(i_1,j_1).\]
  The proof for the case of a cycle in the global interaction graph proceeds similarly.
\end{proof}

If a local interaction graph of the multivalued map contains a cycle,
the interaction graph at the corresponding admissible state does not necessarily contain a cycle;
however, a local graph at a state mapped to the admissible state by $\psi$ contains a cycle.
This means that cycles in the multivalued map might not have a corresponding cycle in the graph of $f^b$,
but do have a corresponding cycle in the interaction graph of $F$.
For example, for the map in Figure~\ref{fig:ex_local_not_preserved_a}, for $F=f^b\circ\psi$ we have $F(0,1,0)=(1,0,1)$ and $F(0,1,1)=(1,1,1)$;
the loop $(1,2)\to (1,2)$ is contained in $G_F(0,0,1)$ and $G_F(0,1,1)$, and the cycle $(1,2)\to (2,1)\to (1,2)$
is in $G_F(0,1,0)$ and $G_F(0,1,1)$.

Finally, we show that a cycle in $G_{F}$ corresponds to some cycle in $G_f$.

\begin{lemma}\label{lem:GFb}
  Consider a Boolean conversion $F$ of a map $f\colon\X\to\X$.
  For each $x\in\Y$, if the graph $G_{F}(x)$ contains an edge from a vertex in $I_{j}$ to a vertex $(i, h)\in I_{i}$ with sign $s$,
  then the graph $G_{f^b}(\psi(x))$ contains an edge from some vertex in $I_j$ to the vertex $(i, h)$, with sign $s$.
\end{lemma}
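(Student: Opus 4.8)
The plan is to unwind the definition of the edge in $G_F(x)$ and transport it through the identity $F = f^b\circ\psi$, reducing everything to a single-coordinate perturbation of the admissible state $\psi(x)$. First I would record what the hypothesis gives: an edge from some $(j,t)\in I_j$ to $(i,h)$ of sign $s$ in $G_F(x)$ means there is a variation $s_1\in\{-1,1\}$ with $x+s_1\e^{j,t}\in\Y$ and
\[
  s = s_1\,\sign\bigl(F_{i,h}(x+s_1\e^{j,t})-F_{i,h}(x)\bigr)\neq 0.
\]
Since $F=f^b\circ\psi$, I can replace $F_{i,h}$ by $f^b_{i,h}\circ\psi$, so that the difference inside the sign becomes $f^b_{i,h}(\psi(x+s_1\e^{j,t}))-f^b_{i,h}(\psi(x))$.

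The key step, and the only place requiring care, is to describe $\psi(x+s_1\e^{j,t})$ in terms of $\psi(x)$. Because each $\psi_{i,j}$ in~\eqref{eq:def_psi} depends on $y$ only through the block sums $\sum_k y_{i,k}$, flipping the single coordinate $(j,t)$ changes only the sum of block $I_j$, and by exactly $s_1$. Writing $S_j=\sum_{k=1}^{m_j}x_{j,k}$, the projected state $\psi(x)$ is the admissible state with $S_j$ ones in block $I_j$, while $\psi(x+s_1\e^{j,t})$ is admissible with $S_j+s_1$ ones there; hence
\[
  \psi(x+s_1\e^{j,t}) = \psi(x)+s_1\e^{j,k^*},\qquad k^* = S_j+\tfrac{s_1+1}{2}.
\]
I would then check that $k^*\in\{1,\dots,m_j\}$: when $s_1=+1$ the flipped coordinate $x_{j,t}=0$ forces $S_j\leq m_j-1$, so $k^*=S_j+1\leq m_j$; when $s_1=-1$ the flipped coordinate $x_{j,t}=1$ forces $S_j\geq 1$, so $k^*=S_j\geq 1$. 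In particular $\psi(x)+s_1\e^{j,k^*}$ is again admissible.

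Substituting this identity into the sign relation yields
\[
  s = s_1\,\sign\bigl(f^b_{i,h}(\psi(x)+s_1\e^{j,k^*})-f^b_{i,h}(\psi(x))\bigr),
\]
where both $\psi(x)$ and $\psi(x)+s_1\e^{j,k^*}$ lie in $\A$. This is precisely the statement that $G_{f^b}(\psi(x))$ contains an edge from $(j,k^*)\in I_j$ to $(i,h)$ of sign $s$ (with variation $s_1$), which is the desired conclusion. I expect no genuine obstacle beyond the index bookkeeping in the $\psi$ computation; the real content is the observation that $\psi$ collapses an arbitrary block-$j$ perturbation of $x$ to the canonical single-coordinate move $s_1\e^{j,k^*}$ at the admissible state, turning a $G_F$-edge into a $G_{f^b}$-edge whose source lies in the correct block $I_j$.
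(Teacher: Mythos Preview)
Your proof is correct and follows essentially the same route as the paper's: both reduce the question to the identity $\psi(\bar{x}^{j,t})=\overline{\psi(x)}^{j,k'}$ for a single index $k'\in I_j$, and then read off the $G_{f^b}$-edge from the equality $F_{i,h}(\bar{x}^{j,t})-F_{i,h}(x)=f^b_{i,h}(\overline{\psi(x)}^{j,k'})-f^b_{i,h}(\psi(x))$. The only difference is cosmetic: you compute the index explicitly as $k^*=S_j+\tfrac{s_1+1}{2}$ and verify its range, whereas the paper simply asserts existence and uniqueness of $k'$ together with $x_{j,k}=0\Leftrightarrow\psi(x)_{j,k'}=0$ to match the signs.
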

\begin{proof}
  Consider a state $x \in \Y$, and indices $(i,h) \in I_i$ and $(j,k) \in I_j$. Then we have
  \[F(\bar{x}^{j,k}) - F(x) = f^b(\psi(\bar{x}^{j,k})) - f^b(\psi(x)).\]
  It is easy to see that there exists a unique $k'\in\NN{m_j}$ such that $\psi(\bar{x}^{j,k}) = \overline{\psi(x)}^{j,k'}$,
  and $x_{j,k} = 0$ if and only if ${\psi(x)}_{j,k'} = 0$. We can therefore write
  \begin{align*}
    \frac{F_{i,h}(\bar{x}^{j,k}) - F_{i,h}(x)}{\bar{x}^{j,k}_{j,k} - x_{j,k}} & =
    \frac{f^b_{i,h}(\overline{\psi(x)}^{j,k'}) - f^b_{i,h}(\psi(x))}{\overline{\psi(x)}^{j,k'}_{j,k'} - {\psi(x)}_{j,k'}}\cdot\frac{\overline{\psi(x)}^{j,k'}_{j,k'} - {\psi(x)}_{j,k'}}{\bar{x}^{j,k}_{j,k} - x_{j,k}} \\
    & = \frac{f^b_{i,h}(\overline{\psi(x)}^{j,k'}) - f^b_{i,h}(\psi(x))}{\overline{\psi(x)}^{j,k'}_{j,k'} - {\psi(x)}_{j,k'}},
  \end{align*}
  which concludes the proof.
\end{proof}

\begin{theorem}\label{thm:circuits_F_to_f}
  Consider a map $f\colon\X\to\X$, and define $F=f^b\circ\psi\colon\Y\to\Y$.
  \begin{enumerate}[(i)]
    \item If, for some $y\in\Y$, the graph $G_{F}(y)$ contains a cycle,
          then the interaction graph $G_f(x)$ contains a cycle, where $x=\b^{-1}(\psi(y))$.
          If the cycle in $G_{F}(y)$ is negative, then $G_f(x)$ contains a negative cycle.
    \item If the global interaction graph of $F$ contains a cycle,
          then the global interaction graph of $f$ contains a cycle.
          If the cycle in $G_{F}$ is negative, then $G_f$ contains a negative cycle.
  \end{enumerate}
\end{theorem}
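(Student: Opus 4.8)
The plan is to push the cycle in $G_F(y)$ down to the vertex set $\{1,\dots,n\}$ of $G_f$, one edge at a time, using Lemmas~\ref{lem:GFb} and~\ref{lem:int_graph_fb_to_f}, and then to extract an elementary cycle from the resulting closed walk. Writing a cycle in $G_F(y)$ as
\[(i_1,h_1)\xrightarrow{s_1}(i_2,h_2)\xrightarrow{s_2}\cdots\xrightarrow{s_{k-1}}(i_k,h_k)\xrightarrow{s_k}(i_1,h_1),\]
its sign is $\prod_{l=1}^k s_l$. The source of the $l$-th edge lies in $I_{i_l}$ and its target is $(i_{l+1},h_{l+1})\in I_{i_{l+1}}$ (indices read modulo $k$). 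By Lemma~\ref{lem:GFb}, $G_{f^b}(\psi(y))$ contains an edge from some vertex of $I_{i_l}$ to $(i_{l+1},h_{l+1})$ with sign $s_l$, and then by Lemma~\ref{lem:int_graph_fb_to_f}, the graph $G_f(x)$ with $x=\b^{-1}(\psi(y))$ contains an edge $i_l\xrightarrow{s_l} i_{l+1}$. Since all of these edges are realised at the same state $x$, and since the target of the $l$-th projected edge coincides with the source of the $(l+1)$-th, they assemble into a closed walk $i_1\to i_2\to\cdots\to i_k\to i_1$ in $G_f(x)$ whose sign is again $\prod_{l=1}^k s_l$, i.e.\ the sign of the original cycle.

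This closed walk need not be elementary, since distinct vertices $(i_l,h_l)$ of the cycle in $G_F(y)$ may share the same first coordinate (and, when $i_l=i_{l+1}$, a projected edge is a loop). The remaining step is therefore the standard observation that a closed walk in a directed graph decomposes into elementary cycles, loops included: whenever a vertex repeats one splits off an elementary cycle and proceeds by induction on the length, partitioning the edge-multiset of the walk into elementary cycles of $G_f(x)$. As the sign of any walk or cycle is by definition the product of the signs of its edges, the product of the signs of these elementary cycles equals the sign of the walk. A closed walk of positive length contains at least one such cycle, which proves that $G_f(x)$ contains a cycle; and if the walk is negative, the product of the cycle signs is negative, so at least one elementary cycle in the decomposition is negative, giving the negative-cycle statement of~(i).

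Part~(ii) follows by the same argument carried out in the global graphs: each edge $(i_l,h_l)\xrightarrow{s_l}(i_{l+1},h_{l+1})$ of a cycle in $G_F$ occurs in some local graph $G_F(y^{(l)})$, and Lemmas~\ref{lem:GFb} and~\ref{lem:int_graph_fb_to_f} then produce an edge $i_l\xrightarrow{s_l} i_{l+1}$ in $G_f$ (realised at $\b^{-1}(\psi(y^{(l)}))$). Collecting these edges yields a closed walk in $G_f$ with the same sign, and the decomposition into elementary cycles concludes exactly as before. I expect the only genuinely delicate point to be the sign bookkeeping in this decomposition: one must check that negativity is inherited by at least one elementary cycle, while no analogous claim holds for positive walks (which may split into two negative cycles), and this is precisely why the theorem asserts only the negative case in its second half.
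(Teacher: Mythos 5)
Your proof is correct and follows essentially the same route as the paper: project each edge of the cycle in $G_F(y)$ through Lemma~\ref{lem:GFb} and then Lemma~\ref{lem:int_graph_fb_to_f} to obtain a closed walk $i_1\to\cdots\to i_k\to i_1$ in $G_f(x)$ with $x=\b^{-1}(\psi(y))$, carrying the same signs. The only difference is that you spell out the decomposition of this closed walk into elementary cycles and the accompanying sign bookkeeping (a negative walk forces a negative elementary cycle, while a positive one may not force a positive elementary cycle), a step the paper leaves implicit and only illustrates afterwards via Example~\ref{ex:pos_no_preserved}.
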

\begin{proof}
  If $(i_1, h_1)\xrightarrow{s_1}\cdots\xrightarrow{s_{k-1}} (i_k, h_k)\xrightarrow{s_k} (i_1, h_1)$ is a cycle
  in $G_{F}(y)$ for some state $y\in\Y$, then by Lemma~\ref{lem:GFb} the graph $G_{f^b}(\psi(y))$ contains edges
  \[(i_1, h_1') \xrightarrow{s_1} (i_2, h_2),\ (i_2, h_2') \xrightarrow{s_2} (i_3, h_3),\ \dots,\ (i_{k}, h_{k}') \xrightarrow{s_k} (i_1, h_1).\]
  From Lemma~\ref{lem:int_graph_fb_to_f}, we find that the graph $G_f(x)$ with $x=\b^{-1}(\psi(y))$ contains the path
  \[i_1 \xrightarrow{s_1} i_2  \xrightarrow{s_2} \cdots \xrightarrow{s_{k-1}} i_{k} \xrightarrow{s_k} i_1.\]
  The proof for the second part is similar.
\end{proof}

The conclusion on the sign of the cycle does not hold for positive cycles:
a (local) positive cycle in $G_{F}$ could correspond to a (local) non-elementary positive cycle in $G_f$, as in the following example.

\begin{example}\label{ex:pos_no_preserved}
Consider the multivalued map with $n=3$, $m_1=m_3=1$, $m_2=2$ defined as follows:
$f(0,2,0)=f(1,2,0)=(0,1,0)$, 
$f(0,2,1)=f(1,1,0)=(1,1,0)$, 
$f(1,0,0)=f(1,0,1)=(1,1,1)$, 
$f(1,1,1)=f(1,2,1)=(1,2,0)$, 
and $f(x)=(1,0,0)$ otherwise.
It can be verified that the local interaction graphs of $f$ do not admit positive cycles
that involve more than one variable.
The local interaction graph at $(1,1,0)$ consists of two negative cycles, with vertex $2$ in common:
\begin{equation}\label{eq:two_neg}
  \begin{tikzcd}[column sep=scriptsize,row sep=tiny]
    \circled{1} \arrow[r, bend left=30, "+"] & \circled{2} \arrow[r, bend left=30, "-"] \arrow[l, bend left=30, "-"] & \circled{3} \arrow[l, bend left=30, "+"]
  \end{tikzcd}
\end{equation}
In the conversion, two variables are introduced to represent the second component.
The interaction graphs $G_F(1,1,0,0)$ and $G_F(1,0,1,0)$ are respectively
\begin{equation*}
  \begin{tikzcd}[column sep=tiny,row sep=0.2em]
    & \circled{2,1}\arrow[rd, bend left=20, "-"] & \\
    \circled{1,1} \arrow[ru, bend left=20, "+"] & & \circled{3,1} \arrow[ld, bend left=20, "+"] \\
    & \circled{2,2} \arrow[lu, bend left=20, "-"] &
  \end{tikzcd}\hspace{20pt}\text{ and }\
  \begin{tikzcd}[column sep=tiny,row sep=0.2em]
    & \circled{2,1}\arrow[ld, bend left=20, "-"] & \\
    \circled{1,1} \arrow[ru, bend left=20, "+"] & & \circled{3,1} \arrow[ld, bend left=20, "+"] \\
    & \circled{2,2} \arrow[ru, bend left=20, "-"] &
  \end{tikzcd}
\end{equation*}
The interaction graph at $(1,1,0,0)$ consists therefore of a positive cycle that involves all four variables,
and corresponds to a non-elementary cycle found at $G_f(1,1,0)$
as the composition of the two negative cycles in~\eqref{eq:two_neg}.
Two negative cycles corresponding to the two cycles in $G_f(1,1,0)$ appear instead in $G_F(1,0,1,0)$.
\end{example}

Notice that one can consider the Boolean conversion $F=f^b\circ\psi$ for maps $f$ that are not stepwise,
obtaining the preservation of the interaction cycles, but not necessarily the preservation of the attractors.
\begin{example}\label{ex:not_stepwise}
  Consider the non-stepwise map $f$ on $\{0,1,2\}^2$ defined by $f(1,1)=(1,2)$, $f(1,2)=(2,0)$,
  $f(2,2)=(0,2)$ and $f(x)=(0,0)$ otherwise.
  The asynchronous state transition graphs for $f$ and the Boolean conversion defined by $F=f^b\circ\psi$ are as follows:

  \medskip

  \begin{minipage}{4cm}
  \centering
    \begin{tikzcd}
      02 \arrow[d] & 12 \arrow[d,rightharpoonup,xshift=+1pt] \arrow[r,rightharpoonup,yshift=+1pt] & 22 \arrow[l,rightharpoonup,yshift=-1pt] \\
      01 \arrow[d] & 11 \arrow[u,rightharpoonup,xshift=-1pt] & 21 \arrow[l] \arrow[d] \\
      00 & 10 \arrow[l] & 20 \arrow[l]
    \end{tikzcd}
  \end{minipage}
    \begin{minipage}{8cm}
    \resizebox{1.0\linewidth}{!}{
    \begin{tikzcd}[ampersand replacement=\&,row sep=small,column sep=tiny]
           \&      \& 1011 \&      \&      \& {\color{white}00} \&      \&      \& {\color{white}00} \& 1111 \\
      1001 \&      \&      \&      \&      \& {\color{white}00} \&      \& 1101 \& {\color{white}00} \& \\
           \&      \&      \& 0011 \&      \& {\color{white}00} \& 0111 \&      \& {\color{white}00} \& \\
           \& 0001 \&      \&      \& 0101 \& {\color{white}00} \&      \&      \& {\color{white}00} \& \\
           \&      \&      \& 0010 \&      \& {\color{white}00} \& 0110 \&      \& {\color{white}00} \& \\
           \& 0000 \&      \&      \& 0100 \& {\color{white}00} \&      \&      \& {\color{white}00} \& \\
           \&      \& 1010 \&      \&      \& {\color{white}00} \&      \&      \& {\color{white}00} \& 1110 \\
      1000 \&      \&      \&      \&      \& {\color{white}00} \&      \& 1100 \& {\color{white}00} \&
      \arrow[from=llllllluuuuuuu,to=uuuuuuu,rightharpoonup,yshift=+1pt] 
      \arrow[from=uuuuuuu,to=llllllluuuuuuu,rightharpoonup,yshift=-1pt] 
      \arrow[from=llllllluuuuuuu,to=lllllllu,rightharpoonup,xshift=+1pt] 
      \arrow[from=lllllllu,to=llllllluuuuuuu,rightharpoonup,xshift=-1pt] 
      \arrow[from=llllllluuuuuuu,to=llllllllluuuuuu,rightharpoonup,yshift=-1pt] 
      \arrow[from=llllllllluuuuuu,to=llllllluuuuuuu,rightharpoonup,yshift=+1pt] 
      \arrow[from=u,to=llluuu,rightharpoonup,yshift=-1pt] 
      \arrow[from=llluuu,to=u,rightharpoonup,yshift=+1pt] 
      \arrow[from=u,to=lllllllu] 
      \arrow[from=u,to=ll] 
      \arrow[from=llluuuuu,to=uuuuuuu,rightharpoonup,yshift=+1pt] 
      \arrow[from=uuuuuuu,to=llluuuuu,rightharpoonup,yshift=-1pt] 
      \arrow[from=llluuuuu,to=llllluuuu,rightharpoonup,yshift=-3pt] 
      \arrow[from=llllluuuu,to=llluuuuu,rightharpoonup,yshift=-1pt] 
      \arrow[from=llluuuuu,to=llluuu,rightharpoonup,xshift=+1pt] 
      \arrow[from=llluuu,to=llluuuuu,rightharpoonup,xshift=-1pt] 
      \arrow[from=llllluuuu,to=lluuuuuu,rightharpoonup,yshift=+3pt] 
      \arrow[from=lluuuuuu,to=llllluuuu,rightharpoonup,yshift=+1pt] 
      \arrow[from=ll,to=llllluu,crossing over] 
      \arrow[from=ll,to=lllllllll,crossing over] 
      \arrow[from=lllllllluuuu,to=lllllllluu,crossing over] 
      \arrow[from=lllllluuu,to=lllllllluu,crossing over] 
      \arrow[from=llllluu,to=lllllllluu,crossing over] 
      \arrow[from=lllllllll,to=lllllllluu,crossing over] 
      \arrow[from=lllllluuuuu,to=lllllluuu,crossing over] 
      \arrow[from=lllllluuuuu,to=lllllllluuuu,crossing over] 
      \arrow[from=llllluuuu,to=lllllllluuuu,crossing over] 
      \arrow[from=llluuu,to=lllllluuu,crossing over] 
      \arrow[from=lluuuuuu,to=llllllllluuuuuu,crossing over] 
      \arrow[from=lluuuuuu,to=ll,crossing over] 
    \end{tikzcd}
    }
    \end{minipage}

  \medskip

    The multivalued asynchronous dynamics has two attractors, given by the sets $\{(0,0)\}$ and $\{(1,1),(1,2),(2,2)\}$.
    In the Boolean conversion, only the fixed point remains, since the asynchronous dynamics contains a path
    from $(1,1,1,1)$ to $(0,0,0,0)$ through non-admissible states.
    Since $f$ has two attractors, a local interaction graph of $f$ contains a cycle of positive sign~\cite{richard2007necessary}
    and, by Theorem~\ref{thm:circuits_f_to_F}, a local positive cycle also appears in the interaction graph of $F$.
\end{example}

\section{First application: a Boolean map with a cyclic attractor and no local negative cycles for $n=6$}\label{sec:ex6}

In~\cite{richard2010negative}, Richard presented an example of discrete multivalued map with a unique cyclic attractor and no negative cycles in the local interaction graphs.
In this section we present a Boolean conversion of this map which demonstrates that the absence of local negative cycles does not imply the existence of a fixed point or the absence of cyclic attractors, for Boolean networks with $n \geq 6$.
An alternative conversion of the same counterexample can be found in~\cite{faure2018circuit}; in Section~\ref{sec:comparison} we compare the two approaches.

\begin{figure}[t]
  \resizebox{\columnwidth}{!}{
  \subcaptionbox{\label{fig:discrete_ex:a}}{
  \begin{tikzcd}[ampersand replacement=\&]
    03 \arrow[r] \& 13 \arrow[r] \& 23 \arrow[r] \& 33 \arrow[d] \\
    02 \arrow[u] \& 12 \arrow[l] \arrow[u] \& 22 \arrow[r] \arrow[u] \& 32 \arrow[d] \\
    01 \arrow[u] \& 11 \arrow[l] \arrow[d] \& 21 \arrow[r] \arrow[d] \& 31 \arrow[d] \\
    00 \arrow[u] \& 10 \arrow[l] \& 20 \arrow[l] \& 30 \arrow[l]
  \end{tikzcd}
  \vspace{20pt}
  }
  \subcaptionbox{\label{fig:discrete_ex:b}}{
  \begin{tikzcd}[ampersand replacement=\&,row sep=tiny]
      G_f(1,1), G_f(2,2): \& \circled{1} \arrow[loop left,"+"] \arrow[r,"-"] \& \circled{2} \arrow[loop right,"+"] \\
      G_f(1,2), G_f(2,1): \& \circled{1} \arrow[loop left,"+"] \& \circled{2} \arrow[loop right,"+"] \arrow[l,"+"'] \\
      G_f(1,0), G_f(2,3): \& \circled{1} \arrow[loop left,"+"] \arrow[r,"-"] \& \circled{2} \\
      G_f(0,2), G_f(3,1): \& \circled{1} \& \circled{2} \arrow[loop right,"+"] \arrow[l,"+"'] \\
      G_f(0,0), G_f(0,1), G_f(3,2), G_f(3,3): \& \circled{1} \arrow[r,"-"] \& \circled{2} \arrow[loop right,"+"] \\
      G_f(0,3), G_f(1,3), G_f(2,0), G_f(3,0): \& \circled{1} \arrow[loop left,"+"] \& \circled{2} \arrow[l,"+"'] \\
  \end{tikzcd}}}
  \caption{(a): Asynchronous state transition graph for the stepwise version $f: \{0,1,2,3\}^2 \rightarrow \{0,1,2,3\}^2$
    of a map with one cyclic attractor (Example $6$ in~\cite{richard2010negative}).
    (b): Local interaction graphs of $f$.}\label{fig:discrete_ex}
\end{figure}
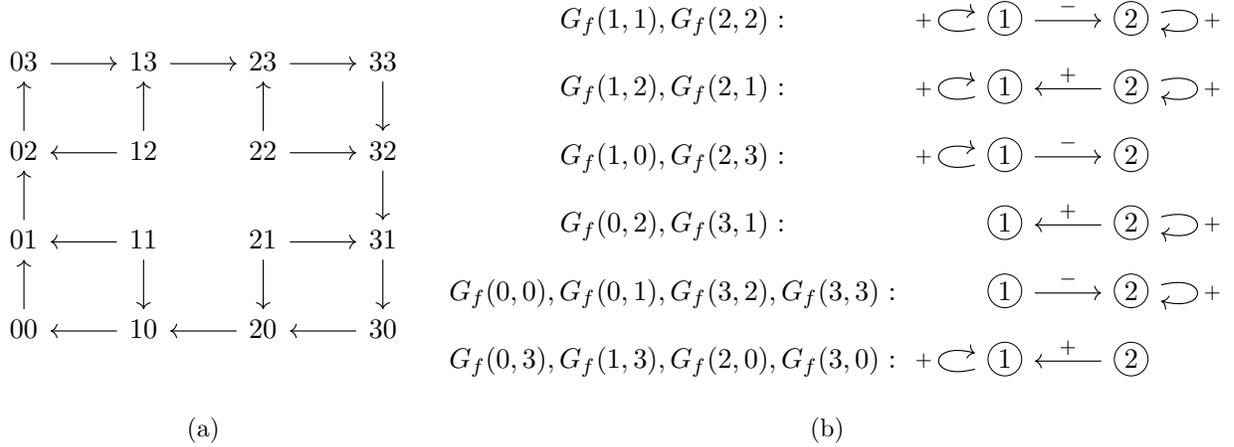

First, to create a Boolean conversion of Richard's example with a unique cyclic attractor,
we can take the stepwise version $f$, which is given in Figure~\ref{fig:discrete_ex:a}.
Since taking the stepwise version can alter the interaction graphs, we check that the local interaction graphs of $f$
do not contain any negative loop (the graphs are given in Figure~\ref{fig:discrete_ex:b}).

We then consider the Boolean conversion $F=\b\circ f\circ\b^{-1}\circ\psi$ of $f$, as described in Section~\ref{sec:circuits}.
For instance, the non-admissible state $(0,1,0,1,0,1)$ is mapped to the image under $\b\circ f\circ\b^{-1}$ of
the corresponding admissible state $\psi(0,1,0,1,0,1)=(1,0,0,1,1,0)$, i.e. to $(0,0,0,1,1,1)$.

Since $F$ is a compatible conversion of $f$ (see the end of Section~\ref{sec:asympt}),
the attractors of $AD_F$ are isomorphic to the attractors of $AD_f$, and therefore the dynamics has one attractor,
which is the image under $\b$ of of the attractive cycle of $f$ (Figure~\ref{fig:conversion:a}).
By application of Theorem~\ref{thm:circuits_F_to_f}, we find that the local interaction graphs of $F$ have no negative cycles.
The global interaction graph for $F$ takes the form given in Figure~\ref{fig:conversion:b}.

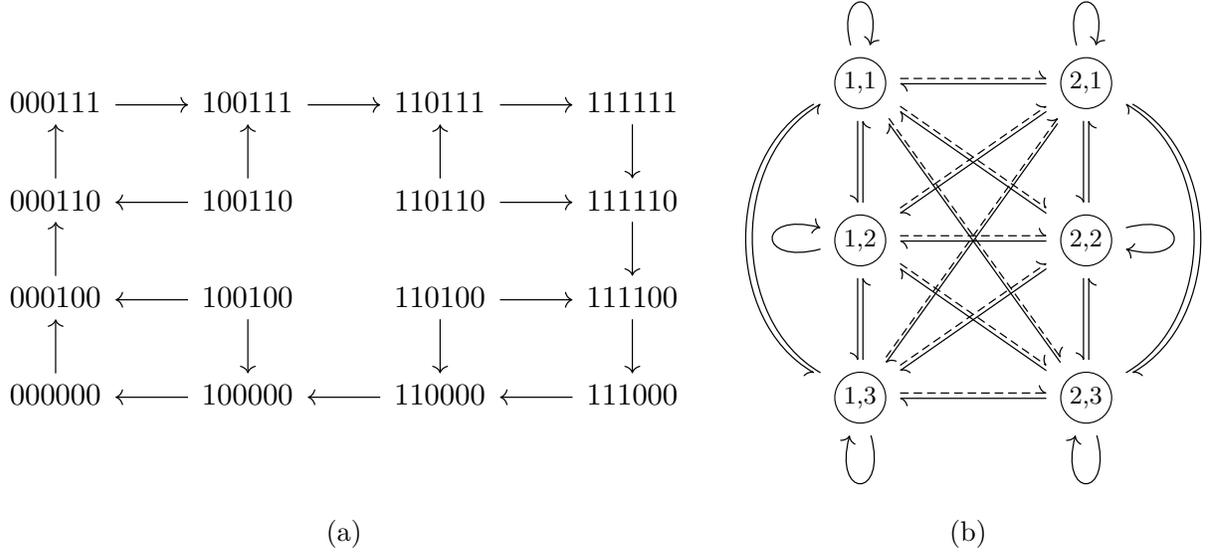
\begin{figure}
  \resizebox{\columnwidth}{!}{
  \subcaptionbox{\label{fig:conversion:a}}{
  \begin{tikzcd}[ampersand replacement=\&]
    000111 \arrow[r] \& 100111 \arrow[r] \& 110111 \arrow[r] \& 111111 \arrow[d] \\
    000110 \arrow[u] \& 100110 \arrow[l] \arrow[u] \& 110110 \arrow[r] \arrow[u] \& 111110 \arrow[d] \\
    000100 \arrow[u] \& 100100 \arrow[l] \arrow[d] \& 110100 \arrow[r] \arrow[d] \& 111100 \arrow[d] \\
    000000 \arrow[u] \& 100000 \arrow[l] \& 110000 \arrow[l] \& 111000 \arrow[l]
  \end{tikzcd}
  \vspace{30pt}}
  \subcaptionbox{\label{fig:conversion:b}}{
  \begin{tikzcd}[ampersand replacement=\&,column sep=huge,row sep=large]
      \circled{\footnotesize{1,1}} \arrow[loop above] \arrow[d, rightharpoondown, xshift=-0.2ex] \arrow[dd, rightharpoondown, bend right=60, xshift=-0.5ex, near end]
      \arrow[r, rightharpoonup, dashed, yshift=+0.2ex] \arrow[r, leftharpoondown, yshift=-0.2ex]
      \arrow[rd, rightharpoonup, dashed, yshift=+0.2ex] \arrow[rd, leftharpoondown, yshift=-0.2ex]
      \arrow[rdd, rightharpoonup, dashed, yshift=+0.2ex] \arrow[rdd, leftharpoondown, yshift=-0.2ex]
      \&
      \circled{\footnotesize{2,1}} \arrow[loop above] \arrow[d, rightharpoondown, xshift=-0.2ex] \arrow[dd, rightharpoonup, bend left=60, xshift=+0.5ex, near end] \\
      \circled{\footnotesize{1,2}} \arrow[loop left] \arrow[u, rightharpoondown, xshift=+0.2ex] \arrow[d, rightharpoondown, xshift=-0.2ex]
      \arrow[ru, rightharpoonup, dashed, yshift=+0.2ex] \arrow[ru, leftharpoondown, yshift=-0.2ex]
      \arrow[r, rightharpoonup, dashed, yshift=+0.2ex] \arrow[r, leftharpoondown, yshift=-0.2ex]
      \arrow[rd, rightharpoonup, dashed, yshift=+0.2ex] \arrow[rd, leftharpoondown, yshift=-0.2ex]
      \&
      \circled{\footnotesize{2,2}} \arrow[loop right] \arrow[u, rightharpoondown, xshift=+0.2ex] \arrow[d, rightharpoondown, xshift=-0.2ex] \\
      \circled{\footnotesize{1,3}} \arrow[loop below] \arrow[u, rightharpoondown, xshift=+0.2ex] \arrow[uu, rightharpoondown, bend left=60, near end]
      \arrow[ruu, rightharpoonup, dashed, yshift=+0.2ex] \arrow[ruu, leftharpoondown, yshift=-0.2ex]
      \arrow[ru, rightharpoonup, dashed, yshift=+0.2ex] \arrow[ru, leftharpoondown, yshift=-0.2ex]
      \arrow[r, rightharpoonup, dashed, yshift=+0.2ex] \arrow[r, leftharpoondown, yshift=-0.2ex]
      \&
      \circled{\footnotesize{2,3}} \arrow[loop below] \arrow[u, rightharpoondown, xshift=+0.2ex] \arrow[uu, rightharpoonup, bend right=60, near end]
  \end{tikzcd}}}
\caption{(a): Asynchronous state transition graph on the admissible states for the Boolean conversion $F=f^b\circ\psi$ for the map in Figure~\ref{fig:discrete_ex}. (b): Global interaction graph of $F$ (negative edges are dashed).}\label{fig:conversion}
\end{figure}

\section{Second application: mirror states and local cycles}\label{sec:mirror}

To introduce this additional application, we need some new definitions.
For $x,y\in\X$, we write $I(x,y)=\{i\in\NN{n}\ | \ x_i\neq y_i\}$.
Given a map $f\colon Y\to Y$, $Y\subseteq X$, we call a pair of distinct states $x,y\in Y$ a \emph{mirror pair} for $f$ if
\begin{equation*}
  f_i(x) \leq x_i,y_i\leq f_i(y) \ \text{or} \ f_i(y) \leq x_i,y_i\leq f_i(x) \ \text{for all } i \in I(x,y).
\end{equation*}
In the Boolean case, the condition reduces to $f_i(x)\neq f_i(y)$ for all $i\in I(x,y)$.

Ruet~\cite{ruet2016local} proved that, if a mirror pair exists for a Boolean map $f$,
then two local interaction graphs of $f$ contain a cycle.

\begin{theorem}(\cite[Theorem 5.1]{ruet2016local}) \label{thm:two_circuits}
  If $f\colon\Y\to\Y$ admits a mirror pair, then there exist two different states $x,y\in\Y$
  such that $G_f(x)$ and $G_f(y)$ contain a cycle, and $x,y$ is a mirror pair for $f$.
\end{theorem}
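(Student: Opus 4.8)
The plan is to argue by induction on $k=|I(x,y)|$, the number of coordinates on which the mirror pair differs, using two Boolean facts recalled in the excerpt: a pair $(x,y)$ is a mirror pair exactly when $f_i(x)\neq f_i(y)$ for every $i\in I(x,y)$, and $G_f(z)$ has an edge $j\to i$ exactly when $f_i(\bar{z}^j)\neq f_i(z)$, since the unique in-cube neighbour of $z$ in direction $j$ is $\bar{z}^j$ regardless of $z_j$. For the base case $k=1$ we have $y=\bar{x}^i$ for the single $i\in I(x,y)$, and $f_i(x)\neq f_i(\bar{x}^i)$ gives a loop $i\to i$ in both $G_f(x)$ and $G_f(y)$; thus $(x,y)$ is itself a mirror pair with a cycle at each of the two states, which is the conclusion.

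For the inductive step with $k\geq 2$ I would, for each $j\in I(x,y)$, move one endpoint toward the other: flipping $x$ at $j$ produces $\bar{x}^j$, which agrees with $y$ at $j$, so $I(\bar{x}^j,y)=I(x,y)\setminus\{j\}$ has size $k-1$, and symmetrically for $(x,\bar{y}^j)$. The argument then splits into a dichotomy. If for \emph{some} $j$ one of the reduced pairs $(\bar{x}^j,y)$ or $(x,\bar{y}^j)$ is again a mirror pair, the induction hypothesis applied to it directly yields a mirror pair with cycles, and we are done (the theorem asks only for existence). Otherwise, for \emph{every} $j\in I(x,y)$ neither reduced pair is a mirror pair, and this is where the cycles are forced: the failure of the mirror condition for $(\bar{x}^j,y)$ means some $i_0\in I(x,y)\setminus\{j\}$ has $f_{i_0}(\bar{x}^j)=f_{i_0}(y)$, which together with $f_{i_0}(x)\neq f_{i_0}(y)$ gives $f_{i_0}(\bar{x}^j)\neq f_{i_0}(x)$, i.e.\ an edge $j\to i_0$ of $G_f(x)$ with $i_0\in I(x,y)$. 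Ranging over all $j$, every vertex of $I(x,y)$ acquires out-degree at least one in the subgraph of $G_f(x)$ induced on $I(x,y)$; a finite digraph with all out-degrees positive contains an elementary directed cycle (follow out-edges until a vertex repeats), so $G_f(x)$ contains a cycle. The same computation applied to the pairs $(x,\bar{y}^j)$ produces a cycle in $G_f(y)$, and since $(x,y)$ is a mirror pair by hypothesis, it is itself the pair we want.

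The step I expect to be the crux is this last ``otherwise'' branch: one has to recognise that the only way to block the dimension-reducing recursion is for the mirror condition to fail at every coordinate at both endpoints, and that each such failure is precisely an out-edge of the local interaction graph landing back inside $I(x,y)$, so that full out-degree—and hence a cycle—is forced \emph{simultaneously} at $x$ and at $y$. The remaining verifications are routine: that the reduced pairs are genuinely pairs of distinct states when $k\geq 2$ (so the induction hypothesis applies), that the extracted cycle is elementary, and that, because the statement only concerns the existence of a cycle, no signs need to be tracked.
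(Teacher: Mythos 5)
Your argument is correct. Note that the paper does not prove this statement: it is imported verbatim from Ruet (\cite[Theorem 5.1]{ruet2016local}), so there is no in-paper proof to compare against; your induction on $|I(x,y)|$ --- base case giving loops, and the dichotomy between reducing the Hamming distance and forcing positive out-degree on $I(x,y)$ in both $G_f(x)$ and $G_f(y)$ --- is essentially Ruet's original argument (usually phrased via a mirror pair at minimal Hamming distance rather than by explicit induction, which is equivalent). All the delicate points check out: the reduced pairs are distinct states when $k\geq 2$, a loop counts as a cycle in this paper's conventions, and positive out-degree at every vertex of a finite digraph yields an elementary cycle.
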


\begin{lemma}\label{lemma:mirror}
  The pair $x,y\in\X$ is a mirror pair for $f\colon\X\rightarrow\X$ if and only if $\b(x)$, $\b(y)$ is a mirror pair
  for $f^b=\b\circ f\circ\b^{-1}\colon\A\rightarrow\A$.
\end{lemma}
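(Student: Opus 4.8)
The plan is to translate each mirror condition into inequalities between minima and maxima of the relevant values, and then observe that the two translations coincide component by component. First I would record the explicit description of the difference set of two admissible states. Writing $a=\b(x)$ and $a'=\b(y)$ and using that $\b_{i,j}(x)=1$ exactly when $x_i\geq j$, one checks that $a_{i,j}\neq a'_{i,j}$ holds precisely when $\min\{x_i,y_i\}<j\leq\max\{x_i,y_i\}$. In particular the pairs $(i,j)\in I(a,a')$ have first coordinate ranging exactly over $I(x,y)$, and for each such $i$ the differing admissible components are the $(i,j)$ with $j$ in the integer interval $\left]\min\{x_i,y_i\},\max\{x_i,y_i\}\right]$.

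Next I would unfold the Boolean mirror condition for $a,a'$ under $f^b$. Since $a=\b(x)$ gives $f^b_{i,j}(a)=\b_{i,j}(f(x))=\chi_{[j,m_i]}(f_i(x))$, the inequality $f^b_{i,j}(a)\neq f^b_{i,j}(a')$ is equivalent to $\min\{f_i(x),f_i(y)\}<j\leq\max\{f_i(x),f_i(y)\}$. Hence the Boolean requirement ``$f^b_{i,j}(a)\neq f^b_{i,j}(a')$ for all $(i,j)\in I(a,a')$'' becomes, for each $i\in I(x,y)$, the containment of integer intervals
\[
  \left]\min\{x_i,y_i\},\max\{x_i,y_i\}\right]\subseteq\left]\min\{f_i(x),f_i(y)\},\max\{f_i(x),f_i(y)\}\right].
\]

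On the multivalued side, I would rewrite the defining disjunction. For a fixed $i\in I(x,y)$, the two alternatives $f_i(x)\leq x_i,y_i\leq f_i(y)$ and $f_i(y)\leq x_i,y_i\leq f_i(x)$ correspond to the two possible orderings of $f_i(x)$ and $f_i(y)$, and together they amount to the single pair of inequalities $\min\{f_i(x),f_i(y)\}\leq\min\{x_i,y_i\}$ and $\max\{x_i,y_i\}\leq\max\{f_i(x),f_i(y)\}$. Since $i\in I(x,y)$ forces $\min\{x_i,y_i\}<\max\{x_i,y_i\}$, the left-hand interval displayed above is nonempty, and for nonempty half-open integer intervals the containment $\left]\alpha,\beta\right]\subseteq\left]\gamma,\delta\right]$ is equivalent to $\gamma\leq\alpha$ and $\beta\leq\delta$. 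This is exactly the pair of inequalities just obtained, so the Boolean condition at $i$ and the multivalued condition at $i$ coincide. Taking the conjunction over all $i\in I(x,y)$ yields the stated equivalence, and injectivity of $\b$ guarantees $x\neq y$ iff $\b(x)\neq\b(y)$, so distinctness of the pair is preserved.

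I expect the only delicate point to be matching the disjunction in the multivalued definition with the universally quantified Boolean condition: the key observation is that both collapse to the same pair of $\min/\max$ inequalities, and that the correct bookkeeping is the containment of the half-open integer intervals, whose nonemptiness is ensured precisely by $i\in I(x,y)$.
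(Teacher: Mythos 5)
Your proof is correct and takes essentially the same route as the paper's: both rest on the observations that $(i,j)\in I(\b(x),\b(y))$ exactly when $j$ lies in the integer interval between $x_i$ and $y_i$, that $f^b_{i,j}(\b(x))\neq f^b_{i,j}(\b(y))$ exactly when $j$ lies between $f_i(x)$ and $f_i(y)$, and (for the converse) on specializing $j$ to the two endpoints of the first interval. Your $\min/\max$ interval-containment packaging is a slightly cleaner way of handling the disjunction in the multivalued definition, but the underlying computation is identical to the paper's.
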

\begin{proof}
  Suppose that $x,y$ is a mirror pair for $f$, and take $(i,j) \in I(\b(x),\b(y))$.
  Then $i\in I(x,y)$, and we can assume $x_i<y_i$.
  In this case, $\b_{i,j}(x)\neq\b_{i,j}(y)$ implies $x_i<j\leq y_i$,
  and either $f_i(x)\leq x_i<j\leq y_i \leq f_i(y)$ or $f_i(y)\leq x_i<j\leq y_i \leq f_i(x)$.
  In both cases, we have $f^b_{i,j}(\b(x))\neq f^b_{i,j}(\b(y))$ as required.

  Conversely, suppose that $\b(x)$, $\b(y)$ is a mirror pair for $f^b$, and take $i\in I(x,y)$.
  We can assume again that $x_i<y_i$.
  For any $j$ such that $x_i<j\leq y_i$ we have $f^b_{i,j}(\b(x))\neq f^b_{i,j}(\b(y))$,
  which means that either $f_i(x)<j$ and $f_i(y)\geq j$ or $f_i(y)<j$ and $f_i(x)\geq j$.
  Taking $j=x_i+1$ and $j=y_i$ we find $f_i(x)\leq x_i<y_i$ and $f_i(y)\geq y_i>x_i$
  or $f_i(y)\leq x_i<y_i$ and $f_i(x)\geq y_i>x_i$.
\end{proof}

A mirror pair for $f^b$ is also a mirror pair for any Boolean conversion $F$, since 
$F$ coincides with $f^b$ on the admissible states $\A$.
Using the conversion introduced in Section~\ref{sec:circuits}, we can now prove a multivalued version of Theorem~\ref{thm:two_circuits}.

\begin{theorem}\label{thm:mirror_multi}
  If $f\colon\X\rightarrow\X$ admits a mirror pair, then there exist two different states $x,y\in\X$
  such that $G_f(x)$ and $G_f(y)$ contain a cycle.
\end{theorem}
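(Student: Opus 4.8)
The plan is to transfer the problem into the Boolean setting through the conversion $F=f^b\circ\psi$, invoke Ruet's Theorem~\ref{thm:two_circuits} there, and then pull the resulting cycles back to $f$ by means of Theorem~\ref{thm:circuits_F_to_f}. First I would start from a mirror pair $x,y$ for $f$ and apply Lemma~\ref{lemma:mirror} to obtain that $\b(x),\b(y)$ is a mirror pair for $f^b$. Since the states $\b(x),\b(y)$ are admissible and $F$ coincides with $f^b$ on $\A$, the pair $\b(x),\b(y)$ is in fact a mirror pair for the Boolean conversion $F$, so $F$ admits a mirror pair.

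Next I would apply Ruet's Theorem~\ref{thm:two_circuits} to the Boolean map $F\colon\Y\to\Y$. This produces two distinct states $u,v\in\Y$ that form a mirror pair for $F$ and such that $G_F(u)$ and $G_F(v)$ each contain a cycle. By Theorem~\ref{thm:circuits_F_to_f}(i), the graphs $G_f(\b^{-1}(\psi(u)))$ and $G_f(\b^{-1}(\psi(v)))$ then both contain a cycle. Setting $x'=\b^{-1}(\psi(u))$ and $y'=\b^{-1}(\psi(v))$ gives two multivalued states whose local interaction graphs contain cycles, which is the conclusion I want, provided these two states are genuinely distinct.

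The main obstacle is precisely this last distinctness requirement: the map $\psi$ is far from injective and could a priori collapse the distinct states $u,v$ to the same admissible state, in which case $x'=y'$ and the argument would fail. This is exactly where I expect the full strength of the mirror pair property (not merely $u\neq v$) to be essential. Since $u\neq v$, the set $I(u,v)$ is nonempty, and the Boolean mirror condition forces $F_{i,j}(u)\neq F_{i,j}(v)$ for every $(i,j)\in I(u,v)$; hence $F(u)\neq F(v)$. Because $F=f^b\circ\psi$, this yields $f^b(\psi(u))\neq f^b(\psi(v))$, and as $f^b$ is a well-defined function this is only possible if $\psi(u)\neq\psi(v)$. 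Injectivity of $\b$ then gives $x'\neq y'$, which closes the argument. I expect the verification of steps one and two to be immediate from the quoted lemma and the preceding remark, and the bookkeeping in the distinctness step to be the only genuinely substantive point.
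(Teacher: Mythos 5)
Your proposal is correct and follows essentially the same route as the paper's own proof: Lemma~\ref{lemma:mirror} to transfer the mirror pair to $F$, Ruet's Theorem~\ref{thm:two_circuits} to get two mirror states with local cycles, Theorem~\ref{thm:circuits_F_to_f} to pull the cycles back, and the observation that the mirror property forces $F(u)\neq F(v)$, hence $\psi(u)\neq\psi(v)$, to secure distinctness. The distinctness step you flag as the substantive point is handled identically in the paper.
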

\begin{proof}
  If $f$ admits a mirror pair, then by Lemma~\ref{lemma:mirror} the map $F=\b\circ f\circ\b^{-1}\circ\psi\colon\Y\to\Y$ admits a mirror pair.
  Theorem~\ref{thm:two_circuits} then gives the existence of two states $x'\neq y' \in \Y$ such that $G_{F}(x')$ and $G_{F}(y')$ contain a cycle.
  Taking $x=\b^{-1}(\psi(x'))$ and $y=\b^{-1}(\psi(y'))$,
  by Theorem~\ref{thm:circuits_F_to_f}, we find that the graphs $G_f(x)$ and $G_f(y)$ contain a cycle.
  In addition, since $x',y'$ is a mirror pair, $F(x')\neq F(y')$ and we have that $\psi(x')\neq\psi(y')$.
  Therefore $x$ and $y$ are distinct.
\end{proof}

We conjecture that, as for the Boolean case, under the hypotheses of Theorem~\ref{thm:mirror_multi},
there exists a mirror pair $x,y\in\X$ such that $G_f(x)$ and $G_f(y)$ contain a cycle
(this is a consequence of Lemma~\ref{lemma:mirror} when the mirror pair $x',y'$ of the proof
of Theorem~\ref{thm:mirror_multi} can be taken in the set of admissible states $\A$).

\section{Comparison with the mapping of Faur{\'e} and Kaji}\label{sec:comparison}

Recently Faur{\'e} and Kaji~\cite{faure2018circuit} proposed an alternative Boolean conversion of multivalued dynamics.
Write $\psi^*=\b^{-1}\circ \psi$, then the \emph{binarisation} $\B(f)\colon\Y\to\Y$ of a map $f\colon\X\to\X$ is defined as
\begin{equation*}
  \mathbb{B}(f)_{i,j}(x) =
    \begin{cases}
      0 & \mbox{if } f_i(\psi^*(x)) < \psi_i^*(x), \\
      x_{i,j} & \mbox{if } f_i(\psi^*(x)) = \psi_i^*(x), \\
      1 & \mbox{if } f_i(\psi^*(x)) > \psi_i^*(x).
    \end{cases}
\end{equation*}
For comparison, the conversion introduced in Section~\ref{sec:circuits} writes as $F=\b\circ f\circ\psi^*$.
Properties of the binarisation are studied in~\cite{faure2018circuit} for maps that are \emph{asymptotic}.
The \emph{asymptotic} version $\hat{f}\colon\X\to\X$ of $f$ is defined as follows:
\begin{equation*}\label{eq:asymptotic}
  \hat{f}_i(x) =
    \begin{cases}
      0 & \mbox{if } f_i(x) < x_i, \\
      x_i & \mbox{if } f_i(x) = x_i, \\
      m_i & \mbox{if } f_i(x) > x_i,
    \end{cases}
\end{equation*}
and a map $f$ is called \emph{asymptotic} if it coincides with its asymptotic version.
The interaction graphs of the asymptotic and stepwise version of a map $f$
can differ in terms of loops, but they otherwise coincide.

The binarisation defines a bijection from the set of asymptotic maps on $\X$ to maps on $\Y$ that are symmetric with
respect to the permutations of the components that correspond to the same multivalued variable:
if $\sigma_i$ is a permutation of $\{1,\dots,m_i\}$, for $i=1,\dots,n$, then the binarisation of $f$ satisfies $\B(f)_{i,j}(x)=\B(f)_{i,\sigma_i(j)}(x_{1,\sigma_1^{-1}(1)},x_{1,\sigma_1^{-1}(2)},\dots,x_{n,\sigma_n^{-1}(m_n)})$.
This symmetry property does not hold for the conversion presented in this work (consider for instance the map in Figure~\ref{fig:ex_binarisation}).

\textit{Asymptotic behaviour.} For a compatible conversion $F$ of $f$,
the attractors of the asynchronous dynamics are in one-to-one correspondence with the attractors of $f$.
In contrast, the attractors of the binarisation $\B(f)$ of an asymptotic function $f$ map surjectively onto the attractors of $AD_f$.
The dynamics $AD_{\B(f)}$ admits a fixed point if and only if $AD_f$ admits a fixed point,
and it admits a cyclic attractor if and only if $AD_f$ admits a cyclic attractor.
Consider for instance the map on $\{0,1,2\}$ defined by $f(x)=2-x$, whose asynchronous state transition state transition graph is represented in Figure~\ref{fig:ex_binarisation}:
the compatible conversion $F=f^b\circ\psi$ has one fixed point $(1,0)$, whereas the binarisation $\B(f)$ admits two symmetric fixed points $(1,0)$ and $(0,1)$.

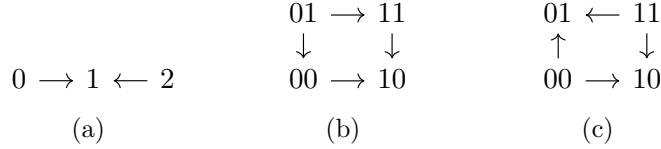
\begin{figure}
  \begin{center}
    \subcaptionbox{}{
      \begin{minipage}{3cm}
      \centering
      \begin{tikzcd}[ampersand replacement=\&,row sep=small,column sep=small]
        0 \arrow[r] \& 1 \& 2 \arrow[l]
      \end{tikzcd}
      \end{minipage}
      }
    \subcaptionbox{}{
      \begin{minipage}{3cm}
      \centering
      \begin{tikzcd}[ampersand replacement=\&,row sep=small,column sep=small]
        01 \arrow[r] \arrow[d] \& 11 \arrow[d] \\
        00 \arrow[r] \& 10
      \end{tikzcd}
      \end{minipage}
      }
    \subcaptionbox{}{
      \begin{minipage}{3cm}
      \centering
      \begin{tikzcd}[ampersand replacement=\&,row sep=small,column sep=small]
        01 \& 11 \arrow[l] \arrow[d] \\
        00 \arrow[u] \arrow[r] \& 10
      \end{tikzcd}
      \end{minipage}
    }
  \end{center}
\caption{(a): Asynchronous dynamics of the map $f\colon\{0,1,2\}\to\{0,1,2\}$ defined by $f(x)=2-x$.
  (b): Asynchronous dynamics of the conversion $F=\b\circ \tf\circ\b^{-1} \circ\psi$ of the stepwise version $\tf$ of $f$.
  (c) Asynchronous dynamics of $\B(f)$.}\label{fig:ex_binarisation}
\end{figure}

\textit{Interaction graphs.} To an edge $i\to j$ in the interaction graph $G_f(x)$ corresponds an edge $(i,i')\to(j,j')$
of the same sign in $G_{\B(f)}(y)$, for all $y\in\Y$ with $x=\psi^*(y)$.
Conversely, an edge $(i,i')\to(j,j')$ in $G_{\B(f)}(y)$ corresponds to an edge $i\to j$ of the same sign in $G_f(\psi^*(y))$.
As a consequence, a cycle in a local interaction graph exists for $\B(f)$ only if it exists for the asymptotic multivalued map $f$.
Since the asymptotic version of the map of Richard's counterexample considered in Section~\ref{sec:ex6}
does not admit any local negative cycle, the binarisation of the counterexample has the same property.
In addition, all the attractors of the binarisation are cyclic.
In fact, the binarisation has only one attractor, which has $28$ vertices and is not a cycle.

Another difference between the two conversion methods is in the preservation of the local cycles in
the interaction graph of $f$: cycles in $G_f(x)$ do not always have a correspondence in the interaction graphs $G_{\B(f)}(y)$ with $x=\psi^*(y)$.
This property of the binarisation might be used to identify some cycles that are not relevant for the asymptotic behaviour,
since these are ``lost'' in the conversion.
For example, consider the local interaction graph $G_f(1,0)$ of the map $f$ in Figure~\ref{fig:ex_binarisation_2},
which consists of a cycle.
The local interaction graphs of $\B(f)$ at $(0,1,0)$ and $(1,0,0)$ are symmetric and do not admit any cycle,
whereas the interaction graph of the conversion $F=\b\circ f\circ\b^{-1}\circ\psi$ admits a cycle at $(0,1,0)$.

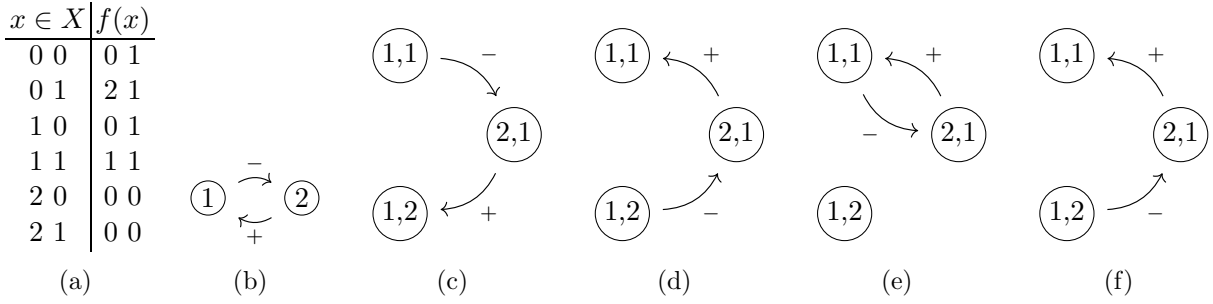
\begin{figure}
  \begin{center}
  \resizebox{\columnwidth}{!}{
    \subcaptionbox{}{
    \tabcolsep=0.05cm
    \begin{tabular} {c|c}
      $x \in \X$ & $f(x)$ \\ \hline
      0 0 & 0 1 \\
      0 1 & 2 1 \\
      1 0 & 0 1 \\
      1 1 & 1 1 \\
      2 0 & 0 0 \\
      2 1 & 0 0 \\
    \end{tabular}}
    \subcaptionbox{}{
    \begin{tikzcd}[ampersand replacement=\&,column sep=small,row sep=small]
        \circled{1} \arrow[r, "-", bend left = 30] \& \circled{2} \arrow[l, bend left = 30, "+"] \\
    \end{tikzcd}
    }
    \subcaptionbox{}{
    \begin{tikzcd}[ampersand replacement=\&,column sep=small,row sep=0.1em]
      \circled{1,1} \arrow[rd,"-",bend left=30] \& \\
      \& \circled{2,1} \arrow[ld,bend left=30,"+"] \\
      \circled{1,2} \&
    \end{tikzcd}
      }
    \subcaptionbox{}{
    \begin{tikzcd}[ampersand replacement=\&,column sep=small,row sep=0.1em]
      \circled{1,1} \& \\
      \& \circled{2,1} \arrow[lu,bend right=30,"+"'] \\
      \circled{1,2} \arrow[ru,"-"',bend right=30] \&
    \end{tikzcd}
      }
    \subcaptionbox{}{
    \begin{tikzcd}[ampersand replacement=\&,column sep=small,row sep=0.1em]
      \circled{1,1} \arrow[rd,"-"',bend right=30] \& \\
      \& \circled{2,1} \arrow[lu,bend right=30,"+"'] \\
      \circled{1,2} \&
    \end{tikzcd}
      }
    \subcaptionbox{}{
    \begin{tikzcd}[ampersand replacement=\&,column sep=small,row sep=0.1em]
      \circled{1,1} \& \\
      \& \circled{2,1} \arrow[lu,bend right=30,"+"'] \\
      \circled{1,2} \arrow[ru,"-"',bend right=30] \&
    \end{tikzcd}
      }
  }
    \caption{(a): Asymptotic map $f\colon\{0,1,2\}\times\{0,1\}\to\{0,1,2\}\times\{0,1\}$.
      (b): Local interaction graph of $f$ at $(1,0)$.
      (c) and (d): Local interaction graphs of $\B(f)$ at $(0,1,0)$ and $(1,0,0)$.
      (e) and (f): Local interaction graphs of $F=\b\circ f\circ\b^{-1}\circ\psi$ at $(0,1,0)$ and $(1,0,0)$.}\label{fig:ex_binarisation_2}
  \end{center}
\end{figure}

\section{Other interaction graphs and cycles in the admissible region}\label{sec:circuits_admissible}

In this section, we study the interaction structure of the partial Boolean conversion
$f^b=\b\circ f\circ\b^{-1}$ of a multivalued map $f$.
In particular, we find versions of the first and second conjecture of Thomas~\cite{thomas1981relation} for $f^b$.

In Section~\ref{sec:circuits}, we showed how the existence of a cycle in the interaction graphs of $f$
does not imply the existence of a cycle in the interaction graphs of $f^b$ (see Figure~\ref{fig:ex_local_not_preserved}).
We consider now a more restrictive definition of interaction graph for $f$, introduced in~\cite{richard2007necessary},
and show that cycles in these graphs have a correspondence in the interaction graphs of $f^b$.

\begin{definition}\label{def:non-usual}
  The \emph{non-usual local interaction graph} $\tG_f(x,y)$ of the map $f\colon\X\to\X$ at a state $x\in\X$
  \emph{with variations in direction of $y$} is the graph with vertex set $\{1, \dots, n\}$,
  and an edge from a vertex $j$ to a vertex $i$ of sign $s$, with $i, j \in I(x,y)$,
  whenever $x + \epsilon_j\e^j \in \X$ and $s = \epsilon_j \sign(f_i(x + \epsilon_j\e^j) - f_i(x))\neq 0$,
  with $\epsilon_k = \sign(y_k - x_k)$ for all $k \in I(x,y)$, and, in addition,
  \[\min{\{f_i(x),f_i(x + \epsilon_j\e^j)\}} < x_i + \frac{\epsilon_i}{2} < \max{\{f_i(x),f_i(x + \epsilon_j\e^j)\}}.\]
\end{definition}
Any non-usual local interaction graph at a state $x \in \X$ is clearly a subgraph of $G_f(x)$.
Given an edge in $G_f(x)$ from $j$ to $i$ with sign $s$ and variation $s_1$,
and a state $y$ with $\epsilon_j=y_j-x_j=s_1$, $y_i-x_i=\epsilon_i$, the graph $\tG_f(x,y)$ contains an edge from $j$ to $i$ with sign $s$
only if $f_i(x)$ and $f_i(x + \epsilon_j\e^j)$ are on different sides of $x+\frac{\epsilon_i}{2}$.
Intuitively, an edge $j\to i$ of the (standard) interaction graph is discarded if
moving from $x$ to $x + \epsilon_j\e^j$ does not alter the direction of change that $f_i$ imposes on component $i$.

For instance, consider the example in Figure~\ref{fig:ex_local_not_preserved}.
To find the local interaction graph $G_f(x)$ at $x=(1,0)$ we compare $f(1,0)=(1,1)$ to $f(x-\e^1)=f(0,0)=(1,0)$ finding a positive edge $1\to2$,
to $f(x+\e^1)=f(2,0)=(1,1)$ finding no edges, and to $f(x+\e^2)=f(1,1)=(2,1)$, finding a positive edge $2\to1$.
Consider now the state $y=(x_1-1,x_2+1)=(0,1)$, so that $\epsilon_1=-1$ and $\epsilon_2=+1$.
Since $f_2(x)=1$ and $f_2(x+\epsilon_1\e^1)=f_2(0,0)=0$ are on different sides of $x_2+\frac{\epsilon_2}{2}=\frac{1}{2}$,
the edge $1\to2$ is in $\tG_f(x,y)$.
We have on the other hand that $f_1(x)=1$ and $f_1(x+\epsilon_2\e^2)=f_1(1,1)=2$, which are on the same side of $x_1+\frac{\epsilon_1}{2}=\frac{1}{2}$,
hence the edge $1\to2$ is not in $\tG_f(x,y)$.
In fact, the non-usual interaction graphs of $f$ do not contain any cycle.

\begin{lemma}\label{lem:non-usual_int_graph}
    Consider $f\colon\X\to\X$ and $x,y\in\X$.
    If the non-usual interaction graph $\tG_f(x,y)$ at $x$ with variations in the direction of $y$ contains an edge from $j$ to $i$ of sign $s$,
    and $\epsilon_k = \sign(y_k-x_k)$ for $k \in I(x,y)$,
    then the graph $G_{f^b}(\b(x))$ contains an edge from
    $(j,x_j + \frac{\epsilon_j+1}{2})$ to $(i,x_i + \frac{\epsilon_i+1}{2})$, with sign $s$.
\end{lemma}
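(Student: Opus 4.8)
The plan is to reduce the statement to Lemma~\ref{lem:int_graph_f_to_fb}, observing that an edge of the non-usual graph $\tG_f(x,y)$ is, by construction, an edge of the standard local interaction graph $G_f(x)$ that additionally satisfies the midpoint condition from Definition~\ref{def:non-usual}.

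First I would note that an edge from $j$ to $i$ of sign $s$ in $\tG_f(x,y)$ is precisely an edge of $G_f(x)$ with direction $j$, variation $s_1=\epsilon_j$, and sign $s=\epsilon_j\sign(f_i(x+\epsilon_j\e^j)-f_i(x))$, with $x+\epsilon_j\e^j\in\X$. Applying Lemma~\ref{lem:int_graph_f_to_fb} with $s_1=\epsilon_j$ then yields that $G_{f^b}(\b(x))$ contains an edge of sign $s$ from $(j,x_j+\frac{\epsilon_j+1}{2})$ to each vertex $(i,k')$ with $k'\in\left]\min\{f_i(x),f_i(x+\epsilon_j\e^j)\},\max\{f_i(x),f_i(x+\epsilon_j\e^j)\}\right]$. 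The source vertex produced here is already the one claimed in the statement, so it only remains to locate the target $k'=x_i+\frac{\epsilon_i+1}{2}$ inside this half-open interval.

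To do this I would write $m=\min\{f_i(x),f_i(x+\epsilon_j\e^j)\}$ and $M=\max\{f_i(x),f_i(x+\epsilon_j\e^j)\}$, so that the defining condition of $\tG_f(x,y)$ reads $m<x_i+\frac{\epsilon_i}{2}<M$. Since $i\in I(x,y)$ we have $\epsilon_i\in\{-1,1\}$, and I would split into two cases. When $\epsilon_i=+1$, integrality of $m,M,x_i$ turns $m<x_i+\frac12<M$ into $m\le x_i$ and $M\ge x_i+1$, whence $m<x_i+1\le M$ with $x_i+1=x_i+\frac{\epsilon_i+1}{2}$; when $\epsilon_i=-1$, the inequality $m<x_i-\frac12<M$ becomes $m\le x_i-1$ and $M\ge x_i$, whence $m<x_i\le M$ with $x_i=x_i+\frac{\epsilon_i+1}{2}$. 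In both cases $x_i+\frac{\epsilon_i+1}{2}$ lies in $\left]m,M\right]$, so the edge supplied by Lemma~\ref{lem:int_graph_f_to_fb} is exactly the asserted edge from $(j,x_j+\frac{\epsilon_j+1}{2})$ to $(i,x_i+\frac{\epsilon_i+1}{2})$ of sign $s$.

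I do not expect a genuine obstacle here; the only point requiring care is the passage between the strict two-sided bound on the midpoint $x_i+\frac{\epsilon_i}{2}$ used to define $\tG_f$ and the half-open membership $k'\in\left]m,M\right]$ delivered by Lemma~\ref{lem:int_graph_f_to_fb}, which is settled purely by the integrality of $f_i(x)$, $f_i(x+\epsilon_j\e^j)$ and $x_i$.
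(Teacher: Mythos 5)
Your proposal is correct and follows essentially the same route as the paper: both reduce to Lemma~\ref{lem:int_graph_f_to_fb} and observe that the strict midpoint condition $m < x_i + \frac{\epsilon_i}{2} < M$ forces $x_i + \frac{\epsilon_i+1}{2}$ into the half-open interval $\left]m,M\right]$ by integrality. Your explicit case split on $\epsilon_i = \pm 1$ just spells out the one-line inequality manipulation the paper uses.
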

\begin{proof}
    If the non-usual local interaction graph $\tG_f(x,y)$ at $x$ contains an edge from $j$ to $i$,
    we have, by definition of non-usual interaction graph,
    \[\min{\{f_i(x),f_i(x + \epsilon_j\e^j)\}} < x_i + \frac{\epsilon_i}{2} < \max{\{f_i(x),f_i(x + \epsilon_j\e^j)\}},\]
    which gives
    \[\min{\{f_i(x),f_i(x + \epsilon_j\e^j)\}} < x_i + \frac{\epsilon_i+1}{2} \leq \max{\{f_i(x),f_i(x + \epsilon_j\e^j)\}}.\]
    The conclusion follows from Lemma~\ref{lem:int_graph_f_to_fb}.
\end{proof}

\begin{proposition}\label{prop:circuit_non-usual}
    If the non-usual interaction graph $\tG_f(x,y)$ of the map $f\colon\X\to\X$ at a state $x \in \X$ with variations in direction of $y$
    admits a cycle of sign $s$, then the graph $G_{f^b}(\b(x))$ admits a cycle of sign $s$.
\end{proposition}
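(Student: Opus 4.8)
The plan is to reduce the statement about cycles directly to the edge-level result just proved in Lemma~\ref{lem:non-usual_int_graph}. A cycle in $\tG_f(x,y)$ is, by definition, an elementary cycle
\[
  i_1 \xrightarrow{s_1} i_2 \xrightarrow{s_2} \dots \xrightarrow{s_{k-1}} i_k \xrightarrow{s_k} i_1,
\]
where each vertex $i_\ell$ lies in $I(x,y)$, so that $\epsilon_{i_\ell} = \sign(y_{i_\ell}-x_{i_\ell})$ is a well-defined element of $\{-1,+1\}$. The first step is to apply Lemma~\ref{lem:non-usual_int_graph} to each edge $i_\ell \to i_{\ell+1}$ of this cycle (indices mod $k$). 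This produces, for each $\ell$, an edge in $G_{f^b}(\b(x))$ from the Boolean vertex $(i_\ell, x_{i_\ell} + \frac{\epsilon_{i_\ell}+1}{2})$ to the Boolean vertex $(i_{\ell+1}, x_{i_{\ell+1}} + \frac{\epsilon_{i_{\ell+1}}+1}{2})$, with the same sign $s_\ell$.

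The key observation that makes these edges concatenate into a genuine cycle is that the Boolean vertex assigned to each multivalued vertex $i_\ell$ depends only on $i_\ell$ (through $x_{i_\ell}$ and $\epsilon_{i_\ell}$), and not on which edge of the cycle we are examining. In other words, setting $v_\ell = (i_\ell,\, x_{i_\ell} + \frac{\epsilon_{i_\ell}+1}{2})$, Lemma~\ref{lem:non-usual_int_graph} gives an edge $v_\ell \xrightarrow{s_\ell} v_{\ell+1}$ in $G_{f^b}(\b(x))$ for every $\ell$. Since the vertices $i_1,\dots,i_k$ are distinct (the cycle in $\tG_f(x,y)$ is elementary and all its vertices lie in $I(x,y)$), the images $v_1,\dots,v_k$ are also distinct, as they have distinct first coordinates. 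Hence
\[
  v_1 \xrightarrow{s_1} v_2 \xrightarrow{s_2} \dots \xrightarrow{s_{k-1}} v_k \xrightarrow{s_k} v_1
\]
is an elementary cycle in $G_{f^b}(\b(x))$.

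Finally, the sign of this Boolean cycle is the product $\prod_{\ell=1}^k s_\ell$, which equals the sign $s$ of the original cycle in $\tG_f(x,y)$, since Lemma~\ref{lem:non-usual_int_graph} preserves the sign of each edge. This yields a cycle of sign $s$ in $G_{f^b}(\b(x))$, as required. I do not expect any serious obstacle here: the entire argument is an edge-by-edge transport followed by the verification that the target vertices are well-defined and distinct. The only point requiring a moment of care is checking that the Boolean vertex associated to $i_\ell$ is consistent across the two edges of the cycle incident to $i_\ell$ — but this is immediate because that vertex is determined purely by $i_\ell$, $x_{i_\ell}$, and $\epsilon_{i_\ell}$, all of which are fixed once $x$ and $y$ are fixed.
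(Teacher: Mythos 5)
Your proposal is correct and follows essentially the same route as the paper: apply Lemma~\ref{lem:non-usual_int_graph} edge by edge and observe that the Boolean vertex $(i_\ell, x_{i_\ell}+\frac{\epsilon_{i_\ell}+1}{2})$ is determined solely by the multivalued vertex $i_\ell$ (given $x$ and $y$), so the transported edges concatenate into a cycle of the same sign. The paper states this more tersely, but the consistency-of-endpoints observation you spell out is exactly what makes its one-line conclusion valid.
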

\begin{proof}
    Let $i_1\to\dots\to i_{k-1}\to i_k = i_1$ be a cycle in $\tG_f(x,y)$, with edge signs $s_1, \dots, s_{k-1}$, and take $\epsilon_k = \sign(y_k - x_k)$, $k \in I(x,y)$.
    By Lemma~\ref{lem:non-usual_int_graph}, the graph $G_{f^b}(\b(x))$ contains an edge
    from $(i_h, x_h + \frac{\epsilon_h+1}{2})$ to $(i_{h+1}, x_{h+1} + \frac{\epsilon_{h+1}+1}{2})$ with sign $s_h$, for all $h = 1, \dots, k-1$, which concludes the proof.
\end{proof}

Non-usual interaction graphs are used in~\cite{richard2008extension} to prove a discrete analogue of the Jacobian conjecture,
and in~\cite{richard2007necessary} to prove a version of the first conjecture of Thomas.
We can use these theorems to prove analogous results for $f^b$.

\begin{theorem}\label{thm:sd_multi}(\cite[Theorem 1]{richard2008extension})
  If a map $f\colon\X\to\X$ is such that $\tG_f(x,y)$ has no cycles for all $x,y\in\X$, then $f$ has a unique fixed point.
\end{theorem}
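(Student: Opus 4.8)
The plan is to prove existence and uniqueness separately; since the statement is a restatement of Richard's combinatorial fixed point theorem, the cleanest option in context is to invoke \cite{richard2008extension}, but a self-contained argument would run as follows. One might first hope to reduce to the Boolean theorem of Shih and Dong by passing to $f^b$ and using Proposition~\ref{prop:circuit_non-usual}, but this route is obstructed: $f^b$ is defined only on the admissible region $\A$, and Proposition~\ref{prop:circuit_non-usual} only sends cycles of $\tG_f$ forward into $G_{f^b}$, not conversely, so it does not let us conclude that $G_{f^b}$ is acyclic and then apply a full Boolean result. Moreover, the hypothesis that no $\tG_f(x,y)$ contains a cycle is strictly weaker than requiring every $G_f(x)$ to be acyclic, because $\tG_f(x,y)$ is always a subgraph of $G_f(x)$; hence the argument cannot appeal to the standard-interaction-graph versions of Thomas' rules and must exploit the strict betweenness inequality of Definition~\ref{def:non-usual} directly.

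For uniqueness I would argue by contradiction. Suppose $a\neq b$ are two fixed points, set $I=I(a,b)$, and let $\epsilon_k=\sign(b_k-a_k)$ for $k\in I$. Since $f_i(a)=a_i$ and $f_i(b)=b_i$, for every $i\in I$ the values $f_i(a)$ and $f_i(b)$ lie strictly on opposite sides of the threshold $a_i+\frac{\epsilon_i}{2}$. Walking from $a$ to $b$ by flipping one coordinate of $I$ at a time therefore forces $f_i$ to cross this threshold at some step, and the coordinate $j$ flipped at that step furnishes a non-usual edge $j\to i$, since crossing the threshold is exactly the strict inequality $\min\{f_i(x),f_i(x+\epsilon_j\e^j)\}<x_i+\frac{\epsilon_i}{2}<\max\{f_i(x),f_i(x+\epsilon_j\e^j)\}$ required by Definition~\ref{def:non-usual}. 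This shows every vertex $i\in I$ receives an incoming edge; a finite directed graph in which every vertex has in-degree at least one contains a cycle, contradicting the hypothesis. The delicate point is to realise all these incoming edges at a single consistent base state $x$ on the path (so that the relevant coordinates still belong to $I(x,y)$ and the threshold $x_i+\frac{\epsilon_i}{2}$ is the intended one), rather than obtaining merely ordinary edges of $G_f$ scattered across different states.

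For existence I would induct on the number of components $n$. In the base case $n=1$, the absence of a non-usual loop $1\to 1$ forces $f$ to cross the diagonal exactly once and hence to have a unique fixed point. For the inductive step I would freeze the last coordinate and study the induced maps on the remaining components, using the no-cycle hypothesis to guarantee that the fixed points of these slices vary monotonically in the frozen coordinate, so that the last component's update meets the diagonal; alternatively, one can argue via the asynchronous dynamics, noting that the finite box $\X$ always carries at least one attractor and showing that a cyclic attractor would produce a non-usual cycle.

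The main obstacle I expect is existence rather than uniqueness: the inductive reduction requires controlling how the fixed points of the lower-dimensional slices move as the frozen coordinate changes, and converting the statement ``no cycle in any $\tG_f(x,y)$'' into the monotonicity needed to pin down a genuine fixed point is precisely the technical heart of Richard's argument. Throughout, the recurring subtlety is to verify that each edge one constructs satisfies the \emph{strict} betweenness condition of Definition~\ref{def:non-usual}, and not merely the weaker condition defining an edge of $G_f$.
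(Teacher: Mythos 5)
The paper offers no proof of this statement---it is imported verbatim as Theorem~1 of \cite{richard2008extension}---so your primary move, invoking that citation, coincides exactly with what the paper does, and the supplementary sketch cannot be compared against anything in the text. For what it is worth, the gaps you flag in your sketch are real (realising all the incoming edges inside a \emph{single} graph $\tG_f(x,y)$ is the crux of uniqueness, and the slice-monotonicity is the crux of existence); note also that the uniqueness half follows more cheaply from the paper's own toolbox, since two distinct fixed points are two distinct attractors of $AD_f$ and Theorem~\ref{thm:first_conj} then produces a positive cycle in some $\tG_f(x,y)$, contradicting the hypothesis.
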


\begin{theorem}\label{thm:first_conj}(\cite[Corollary 1]{richard2007necessary})
  If a map $f\colon\X\to\X$ is such that $AD_f$ admits two distinct attractors,
  then there exists a state $x\in \X$ such that $\tG_f(x,y)$ has a positive cycle for some $y\in\X$.
\end{theorem}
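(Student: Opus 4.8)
The statement is the multivalued analogue of Thomas's first rule (multistationarity forces a positive feedback circuit), so rather than reprove it from scratch I would reconstruct the strategy of \cite{richard2007necessary}. The natural plan is to argue by contraposition: assuming that \emph{no} non-usual interaction graph $\tG_f(x,y)$ carries a positive cycle, I would show that $AD_f$ admits a single attractor, which is the contrapositive of the claim.

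One route that looks tempting in the spirit of this paper is to transport the hypothesis to the Boolean world. Since $AD_f$ and $AD_{f^b}$ are isomorphic, two attractors of $f$ yield two attractors of a compatible conversion $F=f^b\circ\psi$, and the Boolean first conjecture \cite{remy2008graphic} then produces a local positive cycle in $G_F$. The obstruction is that this cycle cannot be pulled back: Theorem~\ref{thm:circuits_F_to_f} transfers the sign only for \emph{negative} cycles, and Example~\ref{ex:pos_no_preserved} already shows a positive cycle in $G_F$ arising as a non-elementary composition of two negative cycles in $G_f$. Thus the conversion gives the wrong direction for positive cycles, which is precisely why the finer non-usual graphs of Definition~\ref{def:non-usual} are needed and why a direct multivalued argument is unavoidable.

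For the direct argument I would mimic the inductive scheme behind the Boolean result: induct on the number of components $n$ (or on $\sum_i m_i$). In the inductive step the absence of positive cycles should let me isolate a component $i$ carrying no positive feedback, project the dynamics onto the remaining components by fixing $x_i$, verify that the reduced map still satisfies the no-positive-cycle hypothesis, and conclude by induction that it has at most one attractor. The passage to the \emph{non-usual} graph is what keeps the projection faithful: by Lemma~\ref{lem:non-usual_int_graph} an edge $j\to i$ survives in $\tG_f(x,y)$ exactly when $f_i(x)$ and $f_i(x+\epsilon_j\e^j)$ lie on opposite sides of the threshold $x_i+\tfrac{\epsilon_i}{2}$, so the hypothesis controls sign changes along the precise variations $\epsilon$ fixed by a pair $x,y$ — exactly the data needed to splice local behaviours along a path joining the two putative attractors.

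The main obstacle, I expect, is the combinatorial core of the induction: extracting from two disjoint trap sets a single pair $x,y$ together with a cyclic sequence of indices all realised inside one graph $\tG_f(x,y)$ whose product of signs is $+1$. Equivalently, one needs a discrete index or Jacobian argument of the same type that underlies Theorem~\ref{thm:sd_multi}, showing that multiple attractors force a positive contribution in an expansion of the non-usual Jacobian at some state. Guaranteeing that the resulting cycle is \emph{positive}, that it is elementary, and that it lives in a single $\tG_f(x,y)$ rather than being distributed across several local graphs is the delicate point where the multivalued subtleties concentrate.
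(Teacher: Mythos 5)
This statement is not proved in the paper at all: it is imported verbatim as \cite[Corollary 1]{richard2007necessary}, so there is no internal proof to compare yours against. Judged as a reconstruction of the cited argument, your proposal has a genuine gap: everything that constitutes the theorem is concentrated in the step you explicitly defer (extracting a single pair $x,y$ and an elementary positive cycle inside one graph $\tG_f(x,y)$). Your negative observations are correct and worth keeping --- the Boolean transport route does fail here, since Theorem~\ref{thm:circuits_F_to_f} only pulls back the sign of negative cycles and Example~\ref{ex:pos_no_preserved} shows a positive cycle of $G_F$ coming from a non-elementary composition in $G_f$ --- but the positive half of your plan (induction on $n$, isolating a feedback-free component, projecting) is neither carried out nor the route taken in the reference.

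The argument of \cite{richard2007necessary} is not an induction on dimension but a minimal-distance argument: pick $x$ and $y$ in two distinct attractors minimising $\sum_i|x_i-y_i|$. Because attractors are trap sets, minimality forces $\sign(f_i(x)-x_i)\neq\epsilon_i$ for every $i\in I(x,y)$, where $\epsilon_i=\sign(y_i-x_i)$ (otherwise $x+\epsilon_i\e^i$ would be a state of the same attractor strictly closer to $y$), and symmetrically at $y$. The structural point your sketch misses is that, under this condition, the threshold inequality in Definition~\ref{def:non-usual} forces $\sign(f_i(x+\epsilon_j\e^j)-f_i(x))=\epsilon_i$ for every edge $j\to i$ of $\tG_f(x,y)$, so that each such edge has sign $\epsilon_j\epsilon_i$ and \emph{every} cycle of $\tG_f(x,y)$ is automatically positive. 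Consequently the ``delicate point'' you identify --- guaranteeing that the resulting cycle is positive --- is free; the actual work is to show that a cycle exists at all, which is done by proving that each vertex of $I(x,y)$ has an in-neighbour in $I(x,y)$ (a finite graph in which every vertex has positive in-degree contains a cycle). Your plan therefore attacks the wrong half of the difficulty, and as it stands does not constitute a proof.
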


\begin{theorem}
  Given $f\colon\X\to\X$, suppose that the graph $G_{f^b}(x)$ does not have any cycle for all $x\in\Y$.
  Then $f^b$ has a unique fixed point.
\end{theorem}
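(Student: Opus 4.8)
The plan is to reduce the statement to the multivalued fixed point theorem (Theorem~\ref{thm:sd_multi}) by translating the cycle hypothesis on the interaction graphs of $f^b$ into a cycle hypothesis on the non-usual graphs $\tG_f$. First I would record the elementary correspondence of fixed points: since $f^b=\b\circ f\circ\b^{-1}$ and $\b$ is injective with image $\A$, a state $a\in\A$ satisfies $f^b(a)=a$ if and only if $\b^{-1}(a)$ is a fixed point of $f$. Hence $f$ and $f^b$ have exactly the same number of fixed points, and it suffices to prove that $f$ has a unique fixed point in $\X$.

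The heart of the argument is Proposition~\ref{prop:circuit_non-usual}, read contrapositively. By hypothesis $G_{f^b}(x)$ has no cycle for every admissible state $x$; since every admissible state is of the form $\b(x)$ for some $x\in\X$, this says that $G_{f^b}(\b(x))$ has no cycle for all $x\in\X$. Suppose, for contradiction, that some non-usual graph $\tG_f(x,y)$ contained a cycle (of any sign). Then Proposition~\ref{prop:circuit_non-usual} would produce a cycle in $G_{f^b}(\b(x))$, contradicting the hypothesis. Therefore $\tG_f(x,y)$ has no cycle for all $x,y\in\X$.

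With this in hand, Theorem~\ref{thm:sd_multi} applies directly and yields that $f$ has a unique fixed point; by the correspondence of the first paragraph, $f^b$ then also has a unique fixed point. I do not anticipate a genuine obstacle, since the two substantive ingredients---the cycle transfer of Proposition~\ref{prop:circuit_non-usual} and the discrete Shih--Dong theorem---are already available. The only point requiring mild care is observing that every admissible state arises as $\b(x)$, so that the hypothesis on $G_{f^b}$ genuinely covers all non-usual graphs $\tG_f(x,y)$. Note also that the full strength of Proposition~\ref{prop:circuit_non-usual}, namely the preservation of the sign, is not needed here: only the transfer of the existence of a cycle is used.
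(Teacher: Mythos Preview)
Your proposal is correct and follows essentially the same route as the paper: use Proposition~\ref{prop:circuit_non-usual} in the contrapositive to deduce that all non-usual graphs $\tG_f(x,y)$ are acyclic, then apply Theorem~\ref{thm:sd_multi}. You are simply more explicit than the paper in spelling out the bijection between fixed points of $f$ and of $f^b$ via $\b$, a step the paper leaves implicit.
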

\begin{proof}
  If the graphs $G_{f^b}(x)$, $x\in\Y$, admit no cycle, then by Proposition~\ref{prop:circuit_non-usual},
  the graphs $\tG_{f}(x',y')$ do not have any cycle, for all $x',y'\in\X$.
  The conclusion follows from Theorem~\ref{thm:sd_multi}.
\end{proof}

\begin{theorem}
  Suppose that the asynchronous dynamics of a map $f\colon\X\to\X$ admits two distinct attractors.
  Then there exists $x\in\A$ such that the graph $G_{f^b}(x)$ admits a positive cycle.
\end{theorem}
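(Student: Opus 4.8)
The plan is to read this off directly from the two results already established earlier in this section: Theorem~\ref{thm:first_conj}, the non-usual-graph version of the first conjecture of Thomas, and Proposition~\ref{prop:circuit_non-usual}, which transfers a cycle \emph{together with its sign} from a non-usual interaction graph $\tG_f(x,y)$ of $f$ to the graph $G_{f^b}$ at the image admissible state $\b(x)$. The theorem is then simply the composition of these two statements, specialised to positive cycles.

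First I would invoke Theorem~\ref{thm:first_conj}: since by hypothesis $AD_f$ admits two distinct attractors, there exist states $x_0\in\X$ and $y\in\X$ such that the non-usual interaction graph $\tG_f(x_0,y)$ contains a positive cycle. Next I would apply Proposition~\ref{prop:circuit_non-usual} with sign $s=+$ to this cycle, obtaining a positive cycle in $G_{f^b}(\b(x_0))$. Finally, since the admissible region is defined as $\A=\b(\X)$, the state $\b(x_0)$ is admissible, so setting $x=\b(x_0)$ produces the required state $x\in\A$ at which $G_{f^b}(x)$ has a positive cycle.

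I expect essentially no obstacle here: the statement is an immediate corollary of the chaining above, with all the substantive work residing in Proposition~\ref{prop:circuit_non-usual} (via the supporting Lemma~\ref{lem:non-usual_int_graph}) and in the cited Theorem~\ref{thm:first_conj}. The only point deserving a remark is that the sign is genuinely preserved: Proposition~\ref{prop:circuit_non-usual} is stated for a cycle of arbitrary sign $s$, so the instantiation $s=+$ is immediate, and \emph{positive} cycles do transfer faithfully in the direction from $f$ to $f^b$. This is in contrast with Theorem~\ref{thm:circuits_F_to_f}, where positivity could fail to lift because a positive cycle in $G_F$ might arise as a non-elementary composition of shorter cycles in $G_f$; going through the non-usual graphs $\tG_f$ into $G_{f^b}$ avoids exactly that phenomenon, which is why the stronger conclusion about positive cycles is available in the present direction.
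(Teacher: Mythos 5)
Your proof is correct and follows exactly the paper's own route: cite Theorem~\ref{thm:first_conj} to obtain a positive cycle in some non-usual interaction graph $\tG_f(x,y)$, then transfer it to $G_{f^b}(\b(x))$ via Proposition~\ref{prop:circuit_non-usual}. The additional remark on why sign preservation holds in this direction is accurate but not needed for the argument.
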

\begin{proof}
  By Theorem~\ref{thm:first_conj}, there exist states $x,y\in\X$ such that $\tG_f(x,y)$ has a positive cycle,
  and by Proposition~\ref{prop:circuit_non-usual}, the graph $G_{f^b}(\b(x))$ admits a positive cycle.
\end{proof}

From the last theorem we can also derive some conclusions on Boolean maps.
Suppose that $\A$ is a subset of $\Y$ that can be interpreted
as the admissible region for some multivalued space $X$.
If the restriction of the asynchronous dynamics to $\A$ admits multiple attractors,
then a local interaction graph must contain a positive cycle (see for instance the Boolean map in Example~\ref{ex:not_stepwise}).

To discuss a version of the second conjecture of Thomas for the partial Boolean conversion $f^b$
of a map $f\colon\X\to\X$, we introduce other definitions.
For each $i=1,\dots,n$, define a map $F^i\colon\X\to\X$ by setting
\[F^i(x)=x + \sign(f_i(x)-x_i)\e^i\]
for all $x\in\X$. That is, $F^i(x)$ coincides with $x$ except on component $i$, where it coincides with the stepwise version of $f$.
The following definition of interaction graph was introduced in~\cite{richard2010negative}, where it is used to prove a multivalued version of the second conjecture of Thomas.
\begin{definition}\label{def:tildeGf_x}(\cite[Definition 5]{richard2010negative})
    Given $f\colon\X\to\X$ and $x\in\X$, $\G_f(x)$ is the graph with vertex set $\{1, \dots, n\}$ that contains an edge from $j$ to $i$ of sign $s$ if
    \begin{itemize}
        \item[(i)] $\sign(f_i(x) - x_i) \neq \sign(f_i(F^j(x)) - F_i^j(x))$ \text{and}
        \item[(ii)] $s = \sign(f_j(x) - x_j) \sign(f_i(F^j(x)) - F_i^j(x))\neq 0$.
    \end{itemize}
\end{definition}

\begin{remark}\label{rmk:G_subgraph}
  Notice that, as a consequence of point $(ii)$ of Definition~\ref{def:tildeGf_x},
  the graph $\G_f(x)$ contains edges of $G_f(x)$ that are calculated for directions $j$ such that $f_j(x) \neq x_j$,
  i.e. such that $(x,x-\e^j)$ or $(x,x+\e^j)$ is a transition in the asynchronous dynamics of $f$.
\end{remark}

Given $A\subseteq\X$, we write $\G_f(A)$ for the graph with vertex set $\NN{n}$
and an edge from $j$ to $i$ with sign $s$ if, for some $x\in A$, the graph $\G_f(x)$
has an edge from $j$ to $i$ with sign $s$.

\begin{theorem}\label{thm:second_conj} (\cite[Theorem $2$]{richard2010negative})
  If the asynchronous dynamics of $f\colon\X\to\X$ has a cyclic attractor $A$,
  then the graph $\G_f(A)$ admits a negative cycle.
\end{theorem}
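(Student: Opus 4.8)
The route I would take is to turn this multivalued statement into an application of the Boolean second conjecture of Thomas through the conversion $F=f^b\circ\psi$ of Section~\ref{sec:circuits}. First I would assume, without loss of generality, that $f$ is stepwise: passing to $\tf$ leaves $AD_f$---and hence the cyclic attractor $A$---unchanged, and since Definition~\ref{def:tildeGf_x} uses $f$ only through the signs $\sign(f_i(x)-x_i)$ and $\sign(f_i(F^j(x))-F^j_i(x))$, one checks immediately that $F^j_{\tf}(x)=F^j(x)$ and $\sign(\tf_i(z)-z_i)=\sign(f_i(z)-z_i)$ for every $z$, so that $\G_{\tf}(x)=\G_f(x)$ for all $x\in\X$.

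With $f$ stepwise, $F=f^b\circ\psi$ is a compatible conversion (end of Section~\ref{sec:asympt}), so by Propositions~\ref{prop:A_trap_domain} and~\ref{prop:nonadm_to_adm} the attractors of $AD_F$ correspond bijectively to those of $AD_f$, and $\b(A)$ is a cyclic attractor of the Boolean map $F$. Because $\A$ is a trap domain and $F$ agrees with $f^b$ on $\A$, for every $y\in\b(A)\subseteq\A$ the states entering the definition of $\G_F(y)$---namely $y$ together with the successors $y+\sign(F_{j,k}(y)-y_{j,k})\e^{j,k}$---all lie in $\A$, so $\G_F(y)=\G_{f^b}(y)$ and therefore $\G_F(\b(A))=\G_{f^b}(\b(A))$. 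Invoking the Boolean form of the statement on this cyclic attractor yields a negative cycle in $\G_{f^b}(\b(A))$, and it then remains to push that cycle down to $\G_f(A)$.

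The main obstacle is precisely this last transfer: the statement concerns the refined graphs $\G$ of Definition~\ref{def:tildeGf_x}, whereas the clean correspondences already available (Lemmas~\ref{lem:int_graph_fb_to_f} and~\ref{lem:GFb}) relate the \emph{standard} graphs $G_{f^b}$ and $G_f$. I would therefore establish a $\G$-analogue of Lemma~\ref{lem:int_graph_fb_to_f}: for $y=\b(x)$, an edge $(j,k)\to(i,h)$ of sign $s$ in $\G_{f^b}(y)$ yields an edge $j\to i$ of sign $s$ in $\G_f(x)$. The observation that makes this a finite check is that, since admissibility must be preserved, the only flip in the $j$-block that can carry an edge of $\G_{f^b}(y)$ is at the boundary index (as in the proof of Proposition~\ref{prop:A_trap_domain}, $k=x_j+1$ if $f_j(x)>x_j$ and $k=x_j$ if $f_j(x)<x_j$), and then $y+\sign(F_{j,k}(y)-y_{j,k})\e^{j,k}=\b(F^j(x))$; substituting $f^b_{i,h}(\b(z))=\chi_{[h,m_i]}(f_i(z))$ and choosing $h$ at the level crossed by $f_i$ reduces conditions~(i) and~(ii) for $f^b$ to the corresponding conditions for $f$. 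Applying this edge by edge to the negative cycle in $\G_{f^b}(\b(A))$---each edge realised at some $y=\b(x)$ with $x\in A$---produces a sign-preserving negative cycle in $\G_f(A)$. One caveat deserves mention: the Boolean input used in the second paragraph is itself the Boolean case of the theorem, so for a self-contained argument that case must be taken as independently known (cf.~\cite{remy2008graphic}); otherwise one falls back on the direct multivalued proof of Richard~\cite{richard2010negative}.
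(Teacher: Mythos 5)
First, note that the paper does not actually prove this statement: it is quoted as Theorem~2 of~\cite{richard2010negative} and used as a black box, and the paper's own Theorem~\ref{thm:neg_circuits_attr_fb} runs your reduction in the \emph{opposite} direction, deducing a Boolean consequence from the multivalued theorem. Your plan inverts this, and the inversion has two problems. The lesser one you flag yourself: the Boolean input you need is not the theorem of~\cite{remy2008graphic}, which only yields a negative cycle in the global interaction graph $G_F$, but precisely the Boolean case of the statement being proved, with the negative cycle localised to $\G_F(\b(A))$; so even if everything else worked you would have reduced the multivalued case to the Boolean case of the same theorem, not given a proof.

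The serious problem is the transfer step. The lemma you propose --- an edge $(j,k)\to(i,h)$ of sign $s$ in $\G_{f^b}(\b(x))$ yields an edge $j\to i$ of sign $s$ in $\G_f(x)$ --- is false, even for stepwise $f$. Take $n=1$, $m_1=2$ and the stepwise map $f(0)=1$, $f(1)=f(2)=2$. At $x=0$ we have $\sign(f(0)-0)=+1=\sign(f(F^1(0))-F^1(0))=\sign(f(1)-1)$, so condition~(i) of Definition~\ref{def:tildeGf_x} fails and $\G_f(0)$ is empty. But $f^b(0,0)=(1,0)$ and $f^b(1,0)=(1,1)$, so at $y=(0,0)$ the bit $(1,2)$ satisfies $\sign(f^b_{1,2}(y)-y_{1,2})=0$, while at the successor $y+\e^{1,1}=(1,0)$ it satisfies $\sign(f^b_{1,2}(1,0)-0)=+1$: conditions~(i) and~(ii) both hold, and $\G_{f^b}(0,0)$ contains the positive edge $(1,1)\to(1,2)$ with no counterpart in $\G_f(0)$. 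The Boolean encoding turns ``component $i$ keeps moving in the same direction but crosses a new threshold'' into a change of direction at the bit level, so $\G_{f^b}$ acquires spurious edges; your phrase ``choosing $h$ at the level crossed by $f_i$'' hides this, because the target bits are dictated by the given negative cycle, not chosen by you. The gap is repairable with more work: for stepwise $f$ one can check that every spurious edge is positive and joins two bits of the same block, so a negative cycle of $\G_{f^b}(\b(A))$ still projects to a negative closed walk in $\G_f(A)$, which must contain a negative elementary cycle. But that case analysis is the real content of the transfer, and it is absent from your argument.
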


\begin{lemma}\label{lemma:G_subgraph}
    Given $f\colon\X\to\X$, for all $x\in\X$, $\G_f(x)$ is a subgraph of $G_f(x)$.
\end{lemma}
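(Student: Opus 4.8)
The plan is to prove the inclusion edge-by-edge: I take an arbitrary labelled edge from $j$ to $i$ of sign $s$ in $\G_f(x)$ and produce the same edge, with the same sign, in $G_f(x)$. So suppose $\G_f(x)$ contains such an edge, and put $\epsilon_j=\sign(f_j(x)-x_j)$. Condition $(ii)$ of Definition~\ref{def:tildeGf_x} forces $s\neq 0$, hence $\epsilon_j\neq 0$; this is exactly the observation recorded in Remark~\ref{rmk:G_subgraph}. Because $\epsilon_j=\sign(f_j(x)-x_j)$ with $f_j(x)\in X_j$, the neighbour $x+\epsilon_j\e^j$ lies in $\X$, so the variation $s_1=\epsilon_j$ is admissible in the definition of $G_f(x)$, and moreover $F^j(x)=x+\epsilon_j\e^j$. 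It therefore suffices to verify, for this variation, that $G_f(x)$ carries an edge $j\to i$ of sign $s$, i.e.\ that $\epsilon_j\,\sign(f_i(x+\epsilon_j\e^j)-f_i(x))=s\neq 0$.

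The heart of the argument is to relate the quantity $\sign(f_i(x+\epsilon_j\e^j)-f_i(x))$ appearing in $G_f(x)$ to $\sign(f_i(F^j(x))-F^j_i(x))$ appearing in condition $(ii)$. Writing $a=f_i(x)$, $b=f_i(x+\epsilon_j\e^j)$ and $c=F^j_i(x)$, condition $(ii)$ gives $\sign(b-c)\neq 0$ and $s=\epsilon_j\,\sign(b-c)$, while condition $(i)$ gives $\sign(a-x_i)\neq\sign(b-c)$. For an off-diagonal edge $i\neq j$ we have $c=F^j_i(x)=x_i$, so these read $b\neq x_i$ and $\sign(a-x_i)\neq\sign(b-x_i)$. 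A short sign chase (splitting on whether $b>x_i$ or $b<x_i$, and using that $a$ then sits weakly on the opposite side of $x_i$) yields $\sign(b-a)=\sign(b-x_i)=\sign(b-c)$ and $b\neq a$. Hence $\epsilon_j\,\sign(b-a)=\epsilon_j\,\sign(b-c)=s$, which is precisely the required edge of $G_f(x)$.

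The only case demanding extra care, and the one I expect to be the main obstacle, is the loop $i=j$: here $F^j_j(x)=x_j+\epsilon_j\neq x_j$, so the comparison value $c$ in the computation above is shifted by $\epsilon_j$ away from $x_j$, and the off-diagonal identity $c=x_i$ no longer holds. I would handle this by the two subcases $\epsilon_j=\pm 1$ separately. For $\epsilon_j=+1$, condition $(i)$ together with the nonvanishing from $(ii)$ forces $f_j(x+\e^j)\leq x_j$, while $\epsilon_j=+1$ gives $f_j(x)\geq x_j+1$; thus $f_j(x+\e^j)<f_j(x)$, so $\sign(b-a)=-1$ and the loop has sign $\epsilon_j\,\sign(b-a)=-1=s$. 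The subcase $\epsilon_j=-1$ is symmetric. This disposes of loops and completes the verification that $\G_f(x)$ is a subgraph of $G_f(x)$; since the claim is purely local and edge-by-edge, no cycle- or global-level reasoning is required.
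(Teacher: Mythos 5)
Your proposal is correct and follows essentially the same route as the paper's proof: identify the variation as $s_1=\sign(f_j(x)-x_j)$ (nonzero by condition $(ii)$), use condition $(i)$ to place $f_i(x)$ and $f_i(F^j(x))$ on weakly opposite sides of the reference value, and split into the cases $i\neq j$ and $i=j$, the latter forcing a negative loop. The paper phrases the sign chase as a single algebraic identity ($f_i(x+s_1\e^j)-f_i(x)=s_2(k_2+h_2)$, resp.\ $s_2(k_2+h_2-1)$ with $h_2>0$) where you argue by subcases, but the content is identical.
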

\begin{proof}
    Let $j\to i$ be an edge of $\G_f(x)$ of sign $s$. Then from point $(ii)$ of Definition~\ref{def:tildeGf_x} we have that $f_j(x) \neq x_j$ and $f_i(F^j(x)) \neq F_i^j(x)$.
    Therefore we can write $f_j(x) - x_j = s_1 k_1$, $F^j(x) = x + s_1 \e^j$ and $f_i(F^j(x)) - F_i^j(x) = s_2 k_2$, with $k_1, k_2 > 0$, $s_1, s_2 \in {\{-1, +1\}}$ and $s = s_1 s_2$.

    Moreover, from $(i)$, we find that $f_i(x) - x_i = -s_2 h_2$ for some $h_2 \geq 0$.
    To conclude that $G_f(x)$ admits an edge from $j$ to $i$ of sign $s$, we show that $\sign(f_i(x + s_1 \e^j) - f_i(x)) = s_2$.

    If $i \neq j$, we have $F_i^j(x) = x_i$ and we can write
    \[f_i(x + s_1 \e^j) - f_i(x) = f_i(x + s_1 \e^j) - F_i^j(x) + x_i - f_i(x) = s_2 (k_2 + h_2).\]

    If instead $i = j$, then necessarily $s_2 = -s_1$ and $h_2 > 0$, and
    \begin{equation*}
      \begin{aligned}
        f_i(x + s_1 \e^j) - f_i(x) & = f_i(x + s_1 \e^j) - F_i^j(x) + x_i + s_1 - f_i(x) \\
        & = s_2 (k_2 + h_2) + s_1 = s_2 (k_2 + h_2 - 1) \neq 0,
      \end{aligned}
    \end{equation*}
    which concludes the proof.
\end{proof}

\begin{theorem}\label{thm:neg_circuits_attr_fb}
  Consider a map $f\colon\X\to\X$, and suppose that $AD_f$ admits a cyclic attractor $A$.
  Then the graph $G_{f^b}(\b(A))$ admits a negative cycle.
\end{theorem}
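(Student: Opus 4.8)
The plan is to combine Theorem~\ref{thm:second_conj}, which furnishes a negative cycle in the multivalued graph $\G_f(A)$, with the edgewise lifting provided by Lemmas~\ref{lemma:G_subgraph} and~\ref{lem:int_graph_f_to_fb}. First I would invoke Theorem~\ref{thm:second_conj} to obtain a negative cycle $i_1\xrightarrow{s_1}i_2\to\cdots\to i_k\xrightarrow{s_k}i_1$ in $\G_f(A)$, where each edge $i_h\to i_{h+1}$ is realised at some state $x^{(h)}\in A$. By Lemma~\ref{lemma:G_subgraph} each such edge belongs to $G_f(x^{(h)})$ with variation $\epsilon_h=\sign(f_{i_h}(x^{(h)})-x^{(h)}_{i_h})$ and direction $i_h$, so Lemma~\ref{lem:int_graph_f_to_fb} produces edges of the same sign $s_h$ in $G_{f^b}(\b(x^{(h)}))$ leaving the Boolean vertex $(i_h,p_h)$ with $p_h=x^{(h)}_{i_h}+\frac{\epsilon_h+1}{2}$.

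Second, for each edge I would pin down the target Boolean coordinate. Writing $\delta_h=\sign(f_{i_{h+1}}(F^{i_h}(x^{(h)}))-x^{(h)}_{i_{h+1}})$, which is nonzero by condition (ii) of Definition~\ref{def:tildeGf_x}, a short case check using condition (i) shows that $q_h:=x^{(h)}_{i_{h+1}}+\frac{\delta_h+1}{2}$ lies in the admissible target interval $\left]\min\{f_{i_{h+1}}(x^{(h)}),f_{i_{h+1}}(F^{i_h}(x^{(h)}))\},\max\{f_{i_{h+1}}(x^{(h)}),f_{i_{h+1}}(F^{i_h}(x^{(h)}))\}\right]$ of Lemma~\ref{lem:int_graph_f_to_fb}; hence $G_{f^b}(\b(x^{(h)}))$ contains the edge $(i_h,p_h)\to(i_{h+1},q_h)$ of sign $s_h$. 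The crucial observation is a matching identity: since $A$ is a trap set, the post-transition state $w=F^{i_h}(x^{(h)})$ again lies in $A$, and for $i_{h+1}\neq i_h$ one has $w_{i_{h+1}}=x^{(h)}_{i_{h+1}}$ and $\sign(f_{i_{h+1}}(w)-w_{i_{h+1}})=\delta_h$, so the source coordinate used by the edge out of $i_{h+1}$ at $w$, namely $w_{i_{h+1}}+\frac{\delta_h+1}{2}$, coincides exactly with the incoming target coordinate $q_h$.

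Third, this identity shows that the Boolean cycle closes up provided the negative cycle of $\G_f(A)$ is realised along a closed walk of the asynchronous dynamics, that is, provided the representative states can be chosen with $x^{(h+1)}=F^{i_h}(x^{(h)})$ for all $h$ (indices mod $k$), all staying in $A$. Under this arrangement the target of each lifted edge equals the source of the next, namely $q_h=p_{h+1}$, so the lifted edges assemble into the cycle $(i_1,p_1)\to(i_2,p_2)\to\cdots\to(i_k,p_k)\to(i_1,p_1)$ in $G_{f^b}(\b(A))$, of sign $s_1\cdots s_k=-1$, as required. The degenerate length-one case of a negative loop $i\to i$ in $\G_f(x)$ lifts directly to a negative Boolean loop in $G_{f^b}(\b(x))$ by the same interval computation, so it needs no gluing.

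I expect the main obstacle to be precisely this last point: guaranteeing that the negative cycle supplied by Theorem~\ref{thm:second_conj} can be traced along transitions of $AD_f$ inside $A$, so that consecutive edges are carried by transition-related states and the Boolean coordinates reconcile globally. Resolving it requires either extracting the cycle in this dynamical form from Richard's construction, or fixing, for each component appearing in the cycle, a single consistent crossing threshold valid across all the states involved (which is automatic whenever each such component oscillates across one level in $A$). Once the coordinates are reconciled, the sign bookkeeping is immediate from Lemma~\ref{lem:int_graph_f_to_fb}, which transfers the sign of every multivalued edge unchanged to its Boolean lift.
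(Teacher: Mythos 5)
Your approach has a genuine gap at exactly the point you flag yourself: the gluing step. Theorem~\ref{thm:second_conj} only supplies a negative cycle in the union graph $\G_f(A)$, whose edges may be realised at unrelated states of $A$; it gives no guarantee that the representative states can be chosen along a closed walk $x^{(h+1)}=F^{i_h}(x^{(h)})$ of $AD_f$ inside $A$, nor that each component occurring in the cycle crosses a single threshold throughout $A$ (in a cyclic attractor a component may oscillate over several levels, as in the example of Section~\ref{sec:ex6}). Since each multivalued vertex $i$ blows up into $m_i$ Boolean vertices, without this reconciliation the lifted edges need not concatenate: the target coordinate $q_h$ of one lifted edge can differ from the source coordinate $p_{h+1}$ of the next, and you only obtain a path, not a cycle, in $G_{f^b}(\b(A))$. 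Your two suggested remedies (extracting the cycle ``in dynamical form'' from Richard's proof, or assuming a single crossing level per component) are not established, so the argument as written does not close.

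The paper sidesteps the problem entirely by applying Theorem~\ref{thm:second_conj} not to $f$ but to a compatible Boolean conversion $F$ of $f$: a Boolean map is a multivalued map with all $m_i=1$, so Richard's theorem applies verbatim, and since $\b(A)$ is a cyclic attractor of $AD_F$ it yields a negative cycle already living in $\G_F(\b(A))$, i.e.\ with the Boolean coordinates already reconciled --- there is nothing to glue. It then remains only to push this cycle into $G_{f^b}(\b(A))$: by Lemma~\ref{lemma:G_subgraph} the graph $\G_F(y)$ is a subgraph of $G_F(y)$, and by Remark~\ref{rmk:G_subgraph} its edges are computed in directions along transitions of $AD_F$, which by Proposition~\ref{prop:A_trap_domain} remain in the admissible region where $F$ coincides with $f^b$. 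If you wish to salvage your route, you would need to strengthen the output of Theorem~\ref{thm:second_conj} so that the negative cycle is carried by a closed walk in $A$, which is substantially more work than this one-line reduction.
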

\begin{proof}
  Consider a compatible conversion $F$ of $f$. Then $\b(A)$ is an attractor for $AD_F$ and,
  by Theorem~\ref{thm:second_conj}, $\G_F(\b(A))$ admits a negative cycle.
  As a consequence of Remark~\ref{rmk:G_subgraph} and Lemma~\ref{lemma:G_subgraph}, the cycle is contained in $G_{f^b}(\b(A))$.
\end{proof}

\section{Conclusion}

We presented a method for converting multivalued asynchronous dynamics to Boolean maps that preserves
the asymptotic behaviour and the interaction structure.
As opposed to the conversion technique of Faur{\'e} and Kaji~\cite{faure2018circuit},
for a multivalued map $f$, we focused on extending the partial Boolean conversion $f\mapsto f^b=\b\circ f\circ\b^{-1}$ obtained using a natural embedding
of multivalued states into Boolean states, introduced and studied in~\cite{van1979deal,didier2011mapping},
to find a one-to-one mapping between attractors of the multivalued and Boolean asynchronous dynamics.
Using the conversion, we were able to construct a $6$-dimensional
Boolean map with an attractive cycle, no fixed points and no local negative cycles~\ref{sec:ex6},
and we showed that multivalued maps with mirror states have local cycles~\ref{sec:mirror}.

In the last part of the work, we looked more closely at the interaction graphs of the map $f^b$,
that can be defined without considering an explicit extension to the non-admissible states,
and observed that results connecting cycles and asymptotic behaviour still hold for these smaller graphs.
The analysis demonstrates how observations on multivalued dynamics can translate into results on restrictions of Boolean maps to admissible regions,
and, conversely, statements on Boolean interaction networks that hold on subregions that are embeddings of multivalued spaces
can be sometimes transported to the multivalued case.

\section*{Acknowledgements}
The author is grateful to Claudine Chaouiya, Etienne Farcot, Adrien Faur{\'e} and Shizuo Kaji for useful discussions,
as well as to the anonymous reviewers for their very helpful comments.

\bibliographystyle{plainnat}
\bibliography{biblio}

\end{document}